\documentclass[12pt]{article}
\usepackage[utf8]{inputenc}
\usepackage[T1]{fontenc}
\usepackage{lmodern}
\usepackage[english]{babel}
\usepackage{microtype}
\usepackage{datetime}
\usepackage{geometry}
\usepackage{setspace}
\usepackage{graphicx}
\usepackage{caption}
\usepackage{subcaption}
\usepackage{float}
\usepackage{booktabs}
\usepackage{tabularx}
\usepackage{ragged2e}
\usepackage{multirow}
\usepackage{amsmath,amssymb,amsthm,mathtools}
\usepackage[authoryear,round]{natbib}
\usepackage{enumitem}
\usepackage[dvipsnames]{xcolor}
\usepackage{xurl}
\usepackage{tikz}
\usetikzlibrary{calc,arrows.meta,positioning,patterns}
\usepackage{titling}
\usepackage[backref=page]{hyperref}
\usepackage[open,openlevel=2,atend,numbered,level=4]{bookmark}
\usepackage[nameinlink]{cleveref}

\setcitestyle{authoryear,round,aysep={,},citesep={;}}
\definecolor{deepblue}{RGB}{0,0,110}
\definecolor{darkgreen}{rgb}{0.0,0.5,0.0}
\hypersetup{
    pdftitle={First They Came for the Others: A Theory of Divide-and-Conquer},
    pdfauthor={Yeon-Koo Che, Jinyuqi Huang, Wooyoung Lim},
    colorlinks=true,
    linkcolor={deepblue},
    citecolor={deepblue},
    urlcolor={deepblue},
    hypertexnames=false,
    pdfpagemode=UseNone,
    pdfdisplaydoctitle=true
}

\renewcommand*{\backrefalt}[4]{%
    \ifcase #1
        No citation in the text.%
    \else
        [#2]%
    \fi}
\setlist[enumerate]{leftmargin=1.5cm,rightmargin=0.5cm,noitemsep,topsep=2pt}
\allowdisplaybreaks
\makeatletter
\newcommand\blfootnote[1]{%
  \begingroup
  \renewcommand\thefootnote{}\footnote{#1}%
  \addtocounter{footnote}{-1}%
  \endgroup
}
\makeatother

\theoremstyle{definition}
\newtheorem{assumption}{Assumption}
\newtheorem{example}{Example}
\newtheorem{definition}{Definition}

\newtheorem{theorem}{Theorem}
\newtheorem{proposition}{Proposition}
\newtheorem{lemma}{Lemma}

\crefname{assumption}{Assumption}{Assumptions}
\Crefname{assumption}{Assumption}{Assumptions}
\crefname{example}{Example}{Example}
\Crefname{example}{Example}{Example}
\crefname{definition}{Definition}{Definitions}
\Crefname{definition}{Definition}{Definitions}
\crefname{remark}{Remark}{Remarks}
\Crefname{remark}{Remark}{Remarks}
\crefname{theorem}{Theorem}{Theorems}
\Crefname{theorem}{Theorem}{Theorems}
\crefname{proposition}{Proposition}{Propositions}
\Crefname{proposition}{Proposition}{Propositions}
\crefname{lemma}{Lemma}{Lemmas}
\Crefname{lemma}{Lemma}{Lemmas}
\crefname{observation}{Observation}{Observations}
\Crefname{observation}{Observation}{Observations}
\crefname{corollary}{Corollary}{Corollaries}
\Crefname{corollary}{Corollary}{Corollaries}
\crefname{figure}{Figure}{Figures}
\Crefname{figure}{Figure}{Figures}
\crefname{table}{Table}{Tables}
\Crefname{table}{Table}{Tables}
\crefname{section}{Section}{Sections}
\Crefname{section}{Section}{Sections}
\crefname{appendix}{Appendix}{Appendices}
\Crefname{appendix}{Appendix}{Appendices}
\crefname{equation}{}{}
\Crefname{equation}{}{}
\crefformat{equation}{#2(#1)#3}
\Crefformat{equation}{#2(#1)#3}
\crefrangeformat{equation}{#3(#1)#4--#5(#2)#6}
\Crefrangeformat{equation}{#3(#1)#4--#5(#2)#6}

\newcommand{\Pbb}{\mathbb P}

\newcommand{\DC}{D\&C}
\newcommand{\Vone}{V_1}
\newcommand{\Vtwo}{V_2}
\newcommand{\A}{A}
\newcommand{\hatc}{\widehat c}

\tikzset{every picture/.style={font=\scriptsize}}
\definecolor{dcblue}{RGB}{77,145,214}
\definecolor{dcred}{RGB}{207,76,66}
\definecolor{dcorange}{RGB}{235,157,58}
\definecolor{dcgreen}{RGB}{106,168,79}
\definecolor{dcpurple}{RGB}{126,87,194}

\definecolor{colCR}{RGB}{210,70,60}          
\definecolor{colDCone}{RGB}{255,210,80}      
\definecolor{colDCboth}{RGB}{240,165,40}     
\definecolor{colHypL}{RGB}{205,205,205}      
\definecolor{colHypD}{RGB}{170,170,170}      

\thanksmarkseries{arabic}
\makeatletter
\renewcommand*{\@fnsymbol}[1]{\ifcase#1\or *\else\@arabic{#1}\fi}
\makeatother

\title{\bf First They Came for the Others:\\ A Theory of Divide-and-Conquer}
\author{
    Yeon-Koo Che\thanks{Department of Economics, Columbia University, USA. Email: \href{mailto:yeonkooche@gmail.com}{\texttt{yeonkooche@gmail.com}}.}\qquad
    Jinyuqi Huang\thanks{Department of Economics, The Hong Kong University of Science and Technology, Hong Kong. Email: \href{mailto:jhuangde@connect.ust.hk}{\texttt{jhuangde@connect.ust.hk}}.}\qquad
    Wooyoung Lim\thanks{Department of Economics, The Hong Kong University of Science and Technology, Hong Kong. Email: \href{mailto:wooyoung@ust.hk}{\texttt{wooyoung@ust.hk}}.}
}
\date{\normalsize This version: \monthname[\month] \the\year}

\begin{document}
\setstretch{1.22}

\maketitle

\blfootnote{We are grateful to Marina Halac, Andrew Koh, and Qingmin Liu for helpful comments and suggestions. This work was supported by the Ministry of Education of the Republic of Korea and the National Research Foundation of Korea (NRF-2024S1A5A2A03038509).}

\begin{abstract}
Divide-and-conquer tactics often succeed not through mechanical coordination failures, but through epistemic friction regarding an aggressor's underlying intent. When an attacker strikes a first target, bystanders must infer whether the assault represents a localized grievance or a systemic campaign. If the attack is rationally interpreted as particularized, bystanders abstain, prompting the isolated victim to surrender. We demonstrate how higher attack costs and lower correlation between victims' fates facilitate this division. We then study how behavioral responses, rhetoric, treaty commitments, and downstream defense networks modify this inference.
\end{abstract}

 \begin{flushleft}
\textit{First they came for the socialists, and I did not speak out---because I was not a socialist.\\
Then they came for the trade unionists, and I did not speak out---because I was not a trade unionist.\\
Then they came for the Jews, and I did not speak out---because I was not a Jew.\\
Then they came for me---and there was no one left to speak for me.}\\
--- Martin Niemöller
\end{flushleft}
\vspace{0.5cm}

\section{Introduction}

History is replete with aggressors who consolidate power or territory sequentially rather
than in a single overwhelming strike. In hindsight, the failure of potential victims to
mount a united front often appears puzzling. Yet, at the moment of the first attack, the
attacker's terminal ambitions are rarely obvious. A first attack may reveal a broad
campaign, but it may also reflect a localized grievance, a disputed border, a particular
ethnic claim, or a target-specific opportunity. The bystander's problem is therefore not
only whether collective resistance would be valuable if all victims were targeted. It is
whether the first attack is informative enough to justify treating the victims' fates as
linked.

The 1938 Munich Agreement is a canonical example. Hitler framed the Sudetenland as a
limited grievance involving ethnic Germans, and the resulting concession delayed a common
recognition of the broader Nazi threat.\footnote{Japan's incremental domination of Korea, from the 1905 Protectorate
Treaty to full annexation in 1910, offers a related case. Early encroachments could be
framed as localized regional adjustments rather than the beginning of full annexation
\citep{BritannicaKoreaJapaneseRule}.} More recently, Russia's 2014 annexation of Crimea
was presented through claims of local separatism and historical exceptionalism, years before
the 2022 full-scale invasion of Ukraine. The same logic appears in authoritarian
consolidation. M\'aty\'as R\'akosi described the elimination of opposition in Hungary as
``salami tactics'': opponents were isolated and removed slice by slice, while each remaining
group had reason to hope that it was not next \citep{TimeRakosi1947,TNSRSalami2021}.\footnote{
Stalin's sequential elimination of rivals had the same structure: first aligning with
Zinoviev and Kamenev against Trotsky, then with the right against Zinoviev and Kamenev,
and finally against the right itself \citep{BritannicaGreatPurge}.}

Standard models of conflict often attribute the failure of collective resistance to
conflicting interests, free-rider problems, or mechanical coordination failures. These
frictions are important, but they leave out a distinct constraint: uncertainty about the
aggressor's true intent. When an attack on one victim is observed, a bystander must infer
whether the attack is particularized or systemic. Divide-and-conquer succeeds when the
localized interpretation is sufficiently plausible. The bystander is not physically prevented
from joining, and she need not misunderstand the value of collective defense. She stays out
because the first attack may not be about her.

This paper develops a theory of divide-and-conquer based on this inferential friction. An
attacker privately observes the benefits of conquering two potential victims, \(\Vone\) and
\(\Vtwo\). Joint resistance is effective, and the timing allows the bystander to join after
observing the first victim's resistance decision. Thus, if both victims know that they are
targets, they resist jointly. The only reason the second victim stays out is uncertainty
about whether the first attack reveals a threat to her. In the baseline model, \(\Vone\) is
commonly understood to be the more natural first target.

The baseline model has two canonical equilibria. In a divide-and-conquer equilibrium, an
attack on \(\Vone\) is sufficiently consistent with a target-specific motive that \(\Vtwo\)'s
posterior risk remains below her danger threshold. She stays out; \(\Vone\) then surrenders,
because fighting alone is worse than surrender; and the attacker can proceed sequentially.
In a collective-resistance equilibrium, the same first attack is sufficiently informative of
a broader campaign; \(\Vtwo\) joins, \(\Vone\) fights, and the attacker attacks only when a
two-front war is profitable. The equilibrium comparison is governed by a single posterior
question: after seeing the first attack, how likely is the bystander to be next?

This informational architecture yields a set of striking comparative statics. Paradoxically,
a higher cost of attack can make divide-and-conquer easier to sustain. When aggression is
viewed as norm-breaking or ``unthinkable,'' an observed attack is more readily rationalized
as exceptional: the bystander reasons that the attacker must have a highly particularized
reason for incurring such a large moral, diplomatic, or reputational cost. In a lower-cost
environment, by contrast, a first attack has less exceptional force and is more easily read as
the opening move of a broader campaign. Thus, the same cost that discourages aggression ex
ante can, conditional on an attack being observed, make the first strike look less systemic.
The other comparative statics operate through the same posterior logic: lower correlation
among the victims' fates weakens the Niem\"oller-style inference, while greater perceived
``specialness'' of the first target makes division easier.

We then study richer environments that influence this inference. Behavioral responses
affect the bystander's willingness to act on danger: wishful thinking permits a reassuring
posterior, while magical thinking makes neutrality appear protective. They both contribute to sustaining the divide-and-conquer equilibrium. Rhetoric and
propaganda affect the information on which the bystander conditions: particularizing
messages sent by the attacker frame the first attack as exceptional and facilitating division; by contrast, systemic messages sent by early victims link it to a broader
campaign, thus helping to mobilize joint resistance with other victims.  Treaty commitments show that institutions can either close or widen the wedge:
collective-defense treaties make inaction costly, whereas neutrality or non-aggression
arrangements may reassure the bystander and expose the first victim. With more than two
victims, the protection of later victims can make an intermediate victim less willing to defend the first one. The common takeaway is that divide-and-conquer turns on how actions,
rhetoric, and institutions shape the bystander's interpretation of the first attack.

The paper contributes to the literature on conflict, collective action, sequential contracting,
and signaling. Sequential contracting models show how a principal can use sequential,
discriminatory offers to break coordination \citep{Segal2003,Winter2004,HalacKremerWinter2020}.
Here, sequentiality creates an incomplete-information problem instead: the first attack is a
signal about the attacker's target scope. The mechanism also differs from standard
rationalist explanations for war and commitment problems \citep{Fearon1995,Powell2006},
and from models of appeasement and deterrence \citep{GurantzHirsch2017}. The
equilibrium-selection logic is related to signaling refinements \citep{ChoKreps1987,BanksSobel1987}.

The remainder of the paper is organized as follows. \Cref{sec:model} introduces the model.
\Cref{sec:baseline} characterizes the baseline equilibria and comparative statics.
\Cref{sec:extensions} studies behavioral responses, rhetoric, treaties, and richer victim
networks. \Cref{sec:conclusion} concludes.

\section{Model}\label{sec:model}

\subsection{Players, payoffs, and time line}

There is one attacker, denoted by \(\A\), and two potential victims, \(\Vone\) and \(\Vtwo\). The attacker privately observes her benefits \((b_1,b_2)\) from conquering the two victims. The victims know the distribution of these benefits but not their realization. The attacker's type is drawn from
\[
  B:=\{(b_1,b_2):0\leq b_2\leq b_1\leq 1\},
\]
according to distribution \(\mu\), with density \(f\) continuous and strictly positive on the interior of \(B\). The support captures environments in which all parties recognize that \(\Vone\) is the attacker's higher-stakes target. For example, \(\A\) may have a territorial claim against \(\Vone\) but not against \(\Vtwo\); \(\Vone\) may be geographically closer; or \(\Vone\) may have particular strategic or symbolic value, as Crimea did for Russia. The cost of attacking a victim is \(c>0\), common knowledge. Besides material costs, \(c\) may include reputational, diplomatic, and moral costs of violating accepted norms. Attacking two victims costs \(2c\). The attacker's payoff from not attacking is normalized to zero.

A victim who is conquered suffers loss \(\ell>0\). This loss may represent territorial loss, loss of political autonomy, loss of security, exclusion from markets, or other consequences of conquest. A victim who fights incurs cost \(d>0\). If a victim \(i\) fights alone, the attacker wins \(b_i\) with probability \(p_1\). If both victims fight together, the attacker wins \(b_1+b_2\) with probability \(p_2\), where
\[
  0<p_2<p_1<1.
\]
Thus the victim's expected loss from fighting alone is \(p_1\ell+d\), while the expected loss from fighting together is \(p_2\ell+d\). We maintain the following assumption.

\begin{assumption}[Resistance is effective only if coordinated]\label{ass:victims}
\[
  p_2\ell+d<\ell<p_1\ell+d.
\]
\end{assumption}

The left inequality says that joint fighting is better for a victim than surrender. The right inequality says that fighting alone is worse than surrender.

The game proceeds as follows.

\begin{enumerate}[label=\textit{Stage \arabic*.},leftmargin=2.8em]
\item \(\A\) observes \((b_1,b_2)\) and chooses whether to attack \(\Vone\). If \(\A\) does not attack, the game ends.

\item If \(\A\) attacks \(\Vone\), \(\Vone\) chooses whether to surrender or fight. If \(\Vone\) fights, \(\Vtwo\) observes the attack and \(\Vone\)'s resistance decision, and chooses whether to join the fight. If \(\Vtwo\) joins, \(\A\) fights both victims and the game ends. If \(\Vtwo\) does not join, \(\A\) fights \(\Vone\) alone.

\item Unless the game has ended in a joint fight, \(\A\) chooses whether to attack \(\Vtwo\). If \(\A\) attacks \(\Vtwo\), \(\Vtwo\) chooses whether to surrender or fight alone.
\end{enumerate}

The corresponding attacker payoffs are as follows. If \(\A\) attacks \(\Vone\), \(\Vone\) surrenders, and \(\A\) does not attack \(\Vtwo\), the attacker receives $ b_1-c.$
 If \(\A\) attacks \(\Vone\), \(\Vone\) surrenders, and \(\A\) later attacks \(\Vtwo\), who also surrenders, the attacker receives
 $b_1-c+b_2-c.$  If a victim \(i\) fights alone, the attacker's expected payoff from that fight is $p_1b_i-c.$  
If \(\Vone\) and \(\Vtwo\) fight together, the attacker's expected payoff is
  $p_2(b_1+b_2)-2c.$

Two comments on the extensive form are useful. First, the fighting decisions are
sequential in order to focus on the inferential friction. If the victims chose simultaneously
whether to fight, the model would contain a standard coordination problem: each victim
might want to fight if the other fights, but fear being left alone. We abstract from that
force by allowing the bystander to observe the first victim's resistance decision before
deciding whether to join. Thus, if the bystander knows that she is also threatened, she can
join; anticipating this, the attacked victim fights. Any failure of collective resistance must
therefore come from uncertainty about the attacker's target scope, not from simultaneous-
move miscoordination.

Second, the extensive form is streamlined for exposition. We could allow \(\A\) either to
attack both victims simultaneously or to begin by attacking \(\Vtwo\). If \(\A\) attacked both
victims simultaneously, both victims would know that the attack is systemic; under
\Cref{ass:victims}, they would fight jointly, and \(\A\)'s payoff would be
\(p_2(b_1+b_2)-2c\). Sequential attack weakly dominates this: if the first attack triggers
joint resistance, \(\A\) obtains the same two-front payoff; if it does not, \(\A\) preserves the
option to attack \(\Vtwo\) only when doing so is profitable. If \(\A\) were allowed to attack
\(\Vtwo\) first, that option would not be chosen under the solution concept introduced
below. Since \(b_2\le b_1\), an attack on \(\Vtwo\) first reveals that \(\Vone\) is also profitable;
hence, if \(\Vtwo\) fights, \(\Vone\) joins. The resulting payoff is no higher than attacking
\(\Vone\) first. \Cref{app:v2first} gives the formal argument.\footnote{We assume in the main text that, if \(\Vtwo\) stays out after an off-path lone fight by \(\Vone\), \(\A\) can later attack \(\Vtwo\) regardless of the outcome of that fight. This is for expositional ease. \Cref{oa:reachability} considers a more general specification in which \(\A\) can attack \(\Vtwo\) only if it first defeats \(\Vone\) in the lone fight. Divide-and-conquer is then easier to sustain because staying out exposes \(\Vtwo\) to a future attack only when \(\A\) wins that fight.}

\subsection{Beliefs and equilibrium}\label{sec:equilibrium}

We use Perfect Bayesian Equilibrium with a local consistency restriction based on the ``no signaling what you don't know'' principle. A victim does not observe \(\A\)'s type, so an unexpected action by that victim should not by itself transmit information about \(\A\)'s type. The bystander still updates from \(\A\)'s observed attack and from \(\A\)'s equilibrium strategy; she simply does not update further from the other victim's off-path action.

\begin{definition}[Regular PBE]\label{def:regular}
A \emph{regular PBE} is a PBE satisfying the following consistency requirement. At any information set reached only because a victim deviates from her prescribed response, the other victim's posterior belief about \(\A\)'s type is obtained from Bayes' rule using \(\A\)'s observed action and \(\A\)'s equilibrium strategy, but does not update further from the deviating victim's unexpected action.
\end{definition}

For example, suppose an equilibrium strategy of \(\A\) attacks \(\Vone\) on a measurable set \(E\subset B\), and suppose \(\Vone\)'s prescribed response after such an attack is surrender. If \(\Vone\) unexpectedly fights, regularity requires
\[
  \mu_2(\,\cdot\mid \A\text{ attacks }\Vone,\ \Vone\text{ fights})
  =
  \mu_2(\,\cdot\mid \A\text{ attacks }\Vone)
  =
  \mu(\,\cdot\mid E).
\]
The bystander updates from the fact that \(\A\) attacked \(\Vone\), but not from \(\Vone\)'s off-path fight. This is the only off-path belief restriction needed for the baseline analysis. The requirement is close in spirit to sequential equilibrium \citep{KrepsWilson1982}, but weaker because it disciplines only the local off-path belief needed here.  Henceforth, unless otherwise stated, ``equilibrium'' means regular PBE.

\subsection{Full-information benchmark}\label{sec:full-info}

Suppose \((b_1,b_2,c)\) is commonly known. If \(b_2\geq c\), then after \(\Vone\) is conquered, \(\A\) would find it profitable to attack \(\Vtwo\). If \(\Vtwo\) stays out of \(\Vone\)'s fight, she is later conquered and suffers loss \(\ell\). If she joins \(\Vone\)'s fight, her loss is \(p_2\ell+d\). By \Cref{ass:victims}, \(\Vtwo\) joins. Anticipating this, \(\Vone\) fights.

Thus, under full information, successful sequential conquest of both victims through passivity is impossible. Whenever \(\Vtwo\) is known to be a future target, the victims resist jointly. The attacker reaches \(\Vtwo\) only when she is willing to fight a two-front war and when \(\Vtwo\) is indeed a profitable target, i.e., only for types satisfying
\[
  b_2\geq c
  \quad\text{and}\quad
  p_2(b_1+b_2)\geq 2c.
\]
Successful sequential conquest of both victims through passivity, therefore, requires incomplete information about the attacker's target scope.

\section{Baseline Analysis}\label{sec:baseline}

\subsection{Characterization of Equilibria}\label{sec:reading}

The pure-strategy analysis has two possible equilibrium outcomes. In a {\bf Divide and Conquer (\DC)} outcome, \(\Vtwo\) stays out if \(\Vone\) fights; anticipating that resistance would be solitary, \(\Vone\) surrenders. In a {\bf Collective-resistance} outcome, \(\Vtwo\) joins if \(\Vone\) resists; anticipating help, \(\Vone\) fights. We construct the two outcomes in turn.

Consider first a \DC\ candidate. If \(\Vtwo\) would stay out after a fight by \(\Vone\), then \(\Vone\)'s resistance would be a lone fight and \(\Vone\) surrenders by \Cref{ass:victims}. Given surrender by \(\Vone\), \(\A\)'s payoff from attacking \(\Vone\) is
\[
  b_1-c+\max\{b_2-c,0\}.
\]
On the support \(B\), this payoff is nonnegative if and only if \(b_1\ge c\). If \(\Vone\) surrenders, \(\A\) later attacks \(\Vtwo\) if and only if \(b_2\ge c\). Thus, after observing an attack on \(\Vone\), regularity gives the posterior
\begin{equation}\label{eq:phi}
  \phi(c):=\Pr(b_2\ge c\mid b_1\ge c).
\end{equation}
This is the probability that \(\A\) also finds \(\Vtwo\) profitable, conditional on the fact that attacking \(\Vone\) is profitable.

\begin{figure}[H]
\centering
\begin{tikzpicture}[scale=6.0, font={\fontsize{11pt}{13pt}\selectfont}]
  \coordinate (O) at (0,0);
  \coordinate (A) at (1,0);
  \coordinate (B) at (1,1);

  \begin{scope}
    \clip (O)--(B)--(A)--cycle;
    \fill[colDCone!50] (.42,0) rectangle (1,1);
    \fill[colDCboth!55] (0,.42) rectangle (1,1);
  \end{scope}

  \draw[->] (0,0) -- (1.08,0) node[right] {$b_1$};
  \draw[->] (0,0) -- (0,1.08) node[above] {$b_2$};

  \draw (O) -- (B) -- (A);

  \draw[densely dotted] (0,0)--(1,1) node[pos=.36,above left] {$45^\circ$};
  \draw[dashed] (.42,0)--(.42,.42)--(1,.42);
  
  \node[below] at (.42,0) {$c$};
  \node[left]  at (0,.42) {$c$};
  
  \node[align=center,black]  at (.71,.21) {Attack $V_1$ only};
  \node[align=center,black] at (.71,.50) {Both targets};
\end{tikzpicture}
\caption{The posterior \(\phi(c)\). The horizontal axis denotes $b_1$, vertical axis denotes $b_2$. The triangle below the $45^\circ$ line denotes the baseline support $B$. The full shaded region (light and dark yellow combined) is \(\{b_1\ge c\}\cap B\), the set of types who attack \(V_1\) if \(V_1\) is expected to surrender. The dark yellow region is \(\{b_2\ge c\}\cap \{b_1\ge c\}\cap B\), the subset that also finds \(V_2\) profitable. The bystander compares the probability of the dark yellow region with that of the whole full shaded region.}
\label{fig:phi}
\end{figure}


\Cref{fig:phi} gives the geometry. The observation that \(\Vone\) was attacked places the type in the region \(b_1\ge c\). Within that region, \(\Vtwo\) is next precisely for types with \(b_2\ge c\); \(\phi(c)\) captures how likely \((b_1,b_2)\) falls into the darker-shaded triangle, conditional on it being in the shaded trapezoid.

If \(\Vtwo\) stays out, her expected loss is \(\phi(c)\ell\). If she joins \(\Vone\)'s fight, her loss is \(p_2\ell+d\). Hence staying out is optimal exactly when
\[
  \phi(c)\ell\le p_2\ell+d,
\]
or equivalently
\begin{equation}\label{eq:kappa}
  \phi(c)\le \kappa:=\frac{p_2\ell+d}{\ell}.
\end{equation}
By \Cref{ass:victims}, \(\kappa\in(0,1)\). The number \(\kappa\) is the bystander's danger threshold: it is the posterior probability of a future attack that makes \(\Vtwo\) indifferent between staying out and joining. A higher \(\kappa\) means that joint resistance is relatively costly or ineffective, so the bystander tolerates a larger posterior danger. A lower \(\kappa\) means that a smaller posterior danger triggers collective defense.

Now consider a collective-resistance candidate. If \(\Vtwo\) would join after a fight by \(\Vone\), then \(\Vone\) fights when attacked. Anticipating joint resistance, \(\A\) attacks only when a two-front war is profitable:
\[
  p_2(b_1+b_2)\ge 2c.
\]
Let
\[
  J(c):=\{(b_1,b_2)\in B:p_2(b_1+b_2)\ge 2c\}
\]
denote this attack set. If \(J(c)\) is nonempty, define
\begin{equation}\label{eq:psi}
  \psi(c):=\Pr(b_2\ge c\mid (b_1,b_2)\in J(c)),
\end{equation}
with the convention that \(\psi(c)\equiv 1\) when \(J(c)\) is empty. If \(\Vtwo\) stayed out after an attack drawn from this set, her expected loss would be \(\psi(c)\ell\), while joining costs \(p_2\ell+d=\kappa\ell\). Hence joining is optimal exactly when
\[
  \psi(c)\ge \kappa.
\]

\begin{assumption} \label{ass:positive-selection}
For every \(c\), \(\psi(c)\ge \phi(c)\).
\end{assumption}

\Cref{ass:positive-selection} says that attackers willing to fight a two-front war are at least as informative of a threat to \(\Vtwo\) as attackers who merely find \(\Vone\) profitable. This is natural: the event \(p_2(b_1+b_2)\ge 2c\) selects types with high total benefit, and high total benefit is likely to be associated with a high value of \(b_2\). The condition clearly holds when \(p_2\) is small or \(c\) is large: if \(p_2<2c/(1+c)\), then \(J(c)\subseteq\{b_2\ge c\}\) on \(B\); whenever \(J(c)\) is nonempty, this gives \(\psi(c)=1\ge \phi(c)\). \Cref{fig:two-front} shows the geometry for a small \(p_2\).

\begin{figure}[H]
\centering
\begin{tikzpicture}[scale=6.0, font={\fontsize{11pt}{13pt}\selectfont}]
  \draw[->] (0,0) -- (1.08,0) node[right] {$b_1$};
  \draw[->] (0,0) -- (0,1.08) node[above] {$b_2$};
  \node[below] at (.28,0) {$c$};
  \node[left] at (0,.28) {$c$};
  \begin{scope}
    \clip (0,0)--(1,0)--(1,1)--cycle;
    \fill[gray!25] (.28,.28) rectangle (1,1);
    \fill[colCR!45] (.72,.72)--(1,.44)--(1,1)--cycle;
  \end{scope}
  \draw[very thick,colCR!85!black] (.72,.72)--(1,.44);
  \draw[dashed] (.28,0)--(.28,.28)--(1,.28);
  \draw (0,0)--(1,1)--(1,0);
  \node[align=center,black] at (.72,.80) {Two-front attack};
  \node[align=center] at (.66,.45) {$b_2\ge c$};
  \node[align=center] at (.66,.16) {$b_2<c$};
\end{tikzpicture}
\caption{The two-front attack set. The red region is \(J(c)=\{p_2(b_1+b_2)\ge 2c\}\) for a small \(p_2\). A demanding two-front-war threshold selects types near the high-benefit corner and is therefore informative that \(\Vtwo\) is also at risk.}
\label{fig:two-front}
\end{figure}


With this assumption, the next theorem gives a clean classification.

\begin{theorem}[Pure equilibria in the baseline model]\label{thm:baseline}
Fix \(c\) and suppose \Cref{ass:victims} holds. Up to payoff-irrelevant tie-breaking at cutoff types, every pure regular equilibrium is one of the following two kinds.

\begin{enumerate}[label=(\roman*),leftmargin=2.7em]
\item \textbf{\DC\ equilibrium.} A \DC\ equilibrium exists if and only if
\begin{equation}\label{eq:DC-condition}
  \phi(c)\le \kappa.
\end{equation}
In this equilibrium, \(\A\) attacks \(\Vone\) if and only if \(b_1\ge c\). If \(\Vone\) surrenders, \(\A\) subsequently attacks \(\Vtwo\) if and only if \(b_2\ge c\). Both victims surrender. If \(\Vone\) were to fight, \(\Vtwo\) would stay out.

\item \textbf{Collective-resistance equilibrium.} A collective-resistance equilibrium exists if and only if
\begin{equation}\label{eq:CR-condition}
  \psi(c)\ge \kappa.
\end{equation}
If \(J(c)\) is nonempty, \(\Vone\) fights when attacked and \(\Vtwo\) joins; anticipating joint resistance, \(\A\) attacks if and only if \((b_1,b_2)\in J(c)\). If \(J(c)=\emptyset\), collective resistance fully deters attack.
\end{enumerate}

If \Cref{ass:positive-selection} also holds, the equilibrium set is classified as follows:
\[
\begin{array}{ll}
\phi(c)>\kappa: & \text{only collective resistance exists;}\\[3pt]
\phi(c)\le \kappa\le \psi(c): & \text{both \DC\ and collective resistance exist;}\\[3pt]
\psi(c)<\kappa: & \text{only \DC\ exists.}
\end{array}
\]
\end{theorem}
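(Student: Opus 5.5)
The classification reduces to a single pure-strategy object: the response of \(\Vtwo\) after \(\A\) attacks \(\Vone\) and \(\Vone\) fights. Because \(\A\)'s first-stage move is a single action (attack \(\Vone\)), there is one such information set. I would first show that a pure regular equilibrium prescribes at most one of two continuation profiles, and then verify each candidate by backward induction.

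\emph{Step 1: reduction to two candidates.} I would split on \(\Vtwo\)'s prescribed action at that information set.

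If \(\Vtwo\) stays out, a fight by \(\Vone\) is a lone fight. By the right inequality of \Cref{ass:victims}, \(\Vone\) surrenders. This uses the main-text convention that \(\A\) can attack \(\Vtwo\) afterwards regardless, and \(\Vone\)'s payoff from the lone fight is \(p_1\ell+d>\ell\) in either case.

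If \(\Vtwo\) joins, \(\Vone\) compares \(p_2\ell+d\) with \(\ell\), so by the left inequality \(\Vone\) fights.

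Stage 3 is then pinned down. After surrender, \(\A\) attacks \(\Vtwo\) iff \(b_2\ge c\), since \(\Vtwo\) facing a lone fight surrenders, so the attack yields \(b_2-c\). \(\A\)'s first-stage choice is a best response to the continuation:
\begin{itemize}
\item in the first case, \(\A\) attacks iff \(b_1-c+\max\{b_2-c,0\}\ge 0\), which on \(B\) is iff \(b_1\ge c\), because \(b_2\le b_1\);
\item in the second case, \(\A\) attacks iff \(p_2(b_1+b_2)\ge 2c\).
\end{itemize}
Ties occur only on measure-zero cutoff sets, since \(f\) has a density; this is the ``payoff-irrelevant tie-breaking.''

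\emph{Step 2: beliefs and \(\Vtwo\)'s incentive.} This is the step needing care.

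In the \DC\ candidate, \(\Vone\)'s fight is off path. Regularity (\Cref{def:regular}) then fixes \(\Vtwo\)'s posterior as \(\mu(\cdot\mid b_1\ge c)\). \(\Vtwo\)'s future loss from staying out is \(\ell\) times the probability that \(\A\) attacks her later. Here I need the continuation after an off-path lone fight: \(\A\) attacks \(\Vtwo\), who then surrenders, iff \(b_2\ge c\). So the danger probability is \(\phi(c)\), and staying out is optimal iff \(\phi(c)\le\kappa\).

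In the collective-resistance candidate with \(J(c)\ne\emptyset\), the fight is on path. Bayes' rule gives posterior \(\mu(\cdot\mid J(c))\). If \(\Vtwo\) deviated to staying out, \(\A\) would later attack her iff \(b_2\ge c\), so her loss is \(\psi(c)\ell\), and joining is optimal iff \(\psi(c)\ge\kappa\).

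If \(J(c)=\emptyset\), no attack occurs. The off-path belief after an attack is then unrestricted by \(\A\)'s equilibrium strategy. Choosing a belief concentrated on \(b_2\ge c\) supports joining, which is consistent with the convention \(\psi\equiv1\ge\kappa\). The main subtlety I expect is exactly this off-path specification, together with checking that regularity is the only restriction invoked.

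\emph{Step 3: necessity, sufficiency, and classification.} Steps 1–2 show that any pure regular equilibrium is one of the two candidates, and that each candidate exists iff its inequality holds. Sufficiency follows by assembling the strategies and beliefs; each player's optimality was verified above.

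Under \Cref{ass:positive-selection}, \(\phi\le\psi\), so the three cases partition parameter space:
\begin{itemize}
\item \(\phi>\kappa\) forces \(\psi>\kappa\), so only collective resistance exists;
\item \(\phi\le\kappa\le\psi\), so both exist;
\item \(\psi<\kappa\) forces \(\phi<\kappa\), so only \DC\ exists.
\end{itemize}
Since \(\phi\le\psi\), at least one of \(\phi\le\kappa\) and \(\psi\ge\kappa\) always holds, so at least one equilibrium always exists.
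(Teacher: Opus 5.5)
Your proposal is correct and follows essentially the same route as the paper's proof: you split on \(\Vtwo\)'s pure response at the unique history in which \(\A\) attacks \(\Vone\) and \(\Vone\) fights, pin down \(\Vone\)'s and \(\A\)'s behavior by backward induction, check \(\Vtwo\)'s incentive using the regularity posterior \(\mu(\cdot\mid b_1\ge c)\) or the Bayes posterior on \(J(c)\), and derive the trichotomy from \Cref{ass:positive-selection}. Your explicit treatment of the unrestricted off-path belief when \(J(c)=\emptyset\) and of \(\A\)'s continuation after an off-path lone fight spells out details that the paper leaves implicit.
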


The proof is in \Cref{app:proofs}. The construction above gives the main incentive logic. In a \DC\ equilibrium, the attack on \(\Vone\) tells \(\Vtwo\) that \(b_1\ge c\), but not whether \(b_2\ge c\). Staying out gives expected loss \(\phi(c)\ell\), while joining gives loss \(p_2\ell+d\). Thus \DC\ is sequentially rational exactly when \(\phi(c)\le\kappa\). Given that \(\Vtwo\) stays out, \(\Vone\)'s resistance would be solitary, so \(\Vone\) surrenders by \Cref{ass:victims}.

In a collective-resistance equilibrium, \(\Vtwo\) conditions on the attack set \(J(c)\), unless the threat of joint resistance fully deters attack. Staying out gives expected loss \(\psi(c)\ell\), while joining gives loss \(p_2\ell+d\). Thus joining is sequentially rational exactly when \(\psi(c)\ge\kappa\). Knowing that help will come, \(\Vone\) fights. Anticipating joint resistance, \(\A\) attacks only when a two-front war is profitable.

The theorem also explains why there are no other pure regular equilibria. After the history in which \(\A\) attacks \(\Vone\) and \(\Vone\) fights, \(\Vtwo\)'s pure response is either to stay out or to join. The first case gives the \DC\ incentives above; the second gives the collective-resistance incentives above.

\subsection{Interpretation and multiplicity}\label{sec:interp}

The theorem clarifies the informational nature of divide-and-conquer. Under full information, if \(\Vtwo\) knows she is next, she joins and \(\Vone\) fights. Under incomplete information, the same first attack can be read in two ways. It can reveal a systemic campaign, in which case the victims unite. Or it can be rationalized as a special action against \(\Vone\), in which case \(\Vtwo\) stays out and \(\Vone\) surrenders.

This logic of the \DC\ equilibrium is close to the lesson commonly associated with Martin Niemöller's ``First they came'' formulation. The United States Holocaust Memorial Museum describes the text as a reflection on silence and complicity in the face of Nazi persecution \citep{USHMMNiemoller2023}. Our model does not require bystanders to be indifferent or irrational. It identifies a Bayesian wedge: after the first attack, the bystander may think, ``perhaps they came for them, not for me.'' The attacker's advantage lies in making that interpretation credible.

When both equilibria exist, these are two different but internally consistent ways for victims to coordinate their beliefs. In the collective-resistance equilibrium, an attack is interpreted as coming from types willing to face joint resistance; this interpretation makes the attack sufficiently alarming. In the \DC\ equilibrium, an attack is interpreted against the broader pool of types that merely find \(\Vone\) profitable; this interpretation leaves enough room for the hope that the attack is localized.

The ambiguity of equilibrium selection can therefore translate into ambiguity in outcomes. The same initial attack may generate collective resistance if victims coordinate on the interpretation that it reveals a broader campaign, or division if they coordinate on the interpretation that it reflects a localized grievance. Russia's 2022 invasion of Ukraine illustrates this tension. While Russia may have expected a \DC\ response in which the invasion would be interpreted as a particular dispute involving Ukraine, the response of Ukraine and the Western alliance reflected a competing interpretation: that the attack revealed a broader challenge rather than an isolated conflict \citep{Zelenskyy2023NDU}.

For the comparative statics and extensions that follow, when multiple equilibria exist, we select the \DC\ equilibrium. This is the attacker-preferred equilibrium. Whenever \DC\ is sequentially rational for the victims, it gives \(\A\) a higher payoff than collective resistance. If \(b_2\ge c\), the \DC\ payoff is \(b_1+b_2-2c\), which strictly exceeds \(p_2(b_1+b_2)-2c\). If \(b_2<c\), the \DC\ payoff is \(b_1-c\), which also exceeds the two-front payoff on the relevant attack set. The logic is related in spirit to forward-induction refinements in signaling games \citep{ChoKreps1987,BanksSobel1987}.

\subsection{Comparative static I: the cost of attack}\label{sec:cost}

The first comparative static concerns the cost of attack. A higher cost mechanically makes the attack less profitable. Conditional on observing an attack, however, a higher cost can lead the bystander to infer the attacker's motive as more particular against \(\Vone\). This latter feature can produce a striking non-monotonicity: as \(c\) rises, \(\Vtwo\) may become less willing to join the resistance, and the attacker may be more likely to employ divide-and-conquer attacks.

To obtain a clean analysis, we impose the following condition on the marginal distributions of \(b_1\) and \(b_2\).

\begin{assumption}[LR domination]\label{ass:LR}
For any \(b'\ge b\),
\[
  \frac{f_1(b')}{f_1(b)}\ge \frac{f_2(b')}{f_2(b)},
\]
where \(f_1\) and \(f_2\) are the marginal densities of \(b_1\) and \(b_2\).
\end{assumption}

While the likelihood-ratio condition is plausible given our support assumption, it is not implied by that assumption. Nevertheless, it is natural, as illustrated next.

\begin{example}[Order-statistic microfoundation]
  Suppose \(b_1=\max\{X_1,X_2\}\) and \(b_2=\min\{X_1,X_2\}\) for i.i.d. draws \(X_1,X_2\) from a distribution \(F\). Then, \(b_1\) LR-dominates \(b_2\) and the posterior
\[
  \phi(c)=\frac{1-F(c)}{1+F(c)}
\]
  is strictly decreasing in \(c\). The derivation is in \Cref{oa:orderstat}.
\end{example}

\begin{lemma}[A costly first attack looks less systemic]\label{lem:LRphi}
Under \Cref{ass:LR}, \(\phi(c)\) is weakly decreasing in \(c\). With strictly positive density on the interior of the support, \(\phi(0)=1\) and \(\lim_{c\uparrow1}\phi(c)=0\).
\end{lemma}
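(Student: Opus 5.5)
The key observation is that on the support \(B\) we have \(b_2\le b_1\), so \(\{b_2\ge c\}\subseteq\{b_1\ge c\}\). Hence the posterior in \eqref{eq:phi} is a ratio of marginal survival functions:
\[
  \phi(c)=\frac{\Pr(b_2\ge c)}{\Pr(b_1\ge c)}=\frac{S_2(c)}{S_1(c)},\qquad S_i(c):=\int_c^1 f_i(t)\,dt .
\]
Monotonicity of \(\phi\) therefore reduces to a hazard-rate comparison between the two marginals. I would derive this comparison from \Cref{ass:LR}, using the standard fact that likelihood-ratio dominance implies hazard-rate dominance.

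Concretely, fix \(c\in(0,1)\). The strictly positive density on the interior of \(B\) gives \(f_1(c)>0\) and \(f_2(c)>0\). Rewrite \Cref{ass:LR} as \(f_1(b')/f_1(c)\ge f_2(b')/f_2(c)\) for all \(b'\ge c\). Integrating in \(b'\) over \([c,1]\) yields
\[
  \frac{S_1(c)}{f_1(c)}\ge\frac{S_2(c)}{f_2(c)},
  \qquad\text{equivalently}\qquad
  \frac{f_1(c)}{S_1(c)}\le\frac{f_2(c)}{S_2(c)} .
\]
The marginals are absolutely continuous, so \(\log\phi\) is differentiable almost everywhere. There
\[
  \frac{d}{dc}\log\phi(c)=\frac{f_1(c)}{S_1(c)}-\frac{f_2(c)}{S_2(c)}\le 0 .
\]
To avoid differentiability issues, I would instead write \(\log\phi(c)-\log\phi(c')\) as the integral of this difference over \([c',c]\), which is nonpositive. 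This shows that \(\phi\) is weakly decreasing on \((0,1)\); the endpoint \(c=0\) is handled by continuity.

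For the boundary values, \(\phi(0)=\Pr(b_2\ge 0)/\Pr(b_1\ge 0)=1/1=1\) is immediate. The limit \(\lim_{c\uparrow1}\phi(c)=0\) is where I expect the main obstacle, because \(f\) is only assumed continuous and positive on the interior, so it could in principle blow up or vanish near the corner \((1,1)\) and the edge \(b_1=1\).

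My first attempt would be to use the inequality just derived, \(\phi(c)\le f_2(c)/f_1(c)\), and compare the two marginal densities directly:
\[
  f_2(c)=\int_c^1 f(b_1,c)\,db_1,\qquad f_1(c)=\int_0^c f(c,b_2)\,db_2 .
\]
The integral defining \(f_2(c)\) runs over an interval of length \(1-c\to0\). The integral defining \(f_1(c)\) runs over an interval of length \(c\to1\).

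If \(f\) is bounded near the corner, then \(f_2(c)\le M(1-c)\to0\). If in addition \(f(c,\cdot)\) stays bounded away from zero on some fixed subinterval, say \(b_2\in[\tfrac14,\tfrac12]\), as \(c\uparrow1\), then \(f_1(c)\) is bounded below. Together these give \(\phi(c)\to0\).

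Alternatively, one can bound the numerator and denominator directly. The numerator satisfies \(S_2(c)\le M(1-c)^2/2\), since \(\{b_2\ge c\}\) is a triangle of area \((1-c)^2/2\). The denominator satisfies \(S_1(c)\ge m(1-c)/4\), using the strip \(\{b_1\ge c,\ b_2\in[\tfrac14,\tfrac12]\}\).

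Either route needs mild regularity of \(f\) up to the boundary of \(B\), such as continuity of \(f\) on the closed triangle. I would state this interpretation of the density assumption explicitly in the proof.
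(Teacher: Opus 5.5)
Your argument for monotonicity and for \(\phi(0)=1\) is the paper's argument. You write \(\phi(c)=\Pbb(b_2\ge c)/\Pbb(b_1\ge c)\), integrate the LR inequality over \([c,1]\) to get \(f_1(c)\Pbb(b_2\ge c)\le f_2(c)\Pbb(b_1\ge c)\), and read off the sign of the derivative. Working with \(\log\phi\) and integrating over \([c',c]\) is only a tidier version of the paper's derivative computation.

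On \(\lim_{c\uparrow1}\phi(c)=0\) you are more careful than the paper. Its proof only asserts that the upper tail of \(b_2\) vanishes faster than that of \(b_1\) ``under positive density on the triangular support,'' and this does not follow from the stated hypotheses. Your two routes are valid because they use two explicit bounds: \(f\le M\) near the corner \((1,1)\), and \(f\ge m>0\) on a strip \([c,1]\times[\tfrac14,\tfrac12]\) running into the edge \(b_1=1\).

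The regularity you propose to state, however, is not sufficient. Continuity of \(f\) on the closed triangle gives only the first bound. Take
\[
  f(b_1,b_2)=\frac{12(1-b_1)^3}{(1-b_2)^2}\quad\text{on } B .
\]
Since \(1-b_2\ge 1-b_1\) on \(B\), we have \(f\le 12(1-b_1)\). So \(f\) extends continuously by zero to the edge \(b_1=1\), and it is strictly positive on the interior. Its marginals are \(f_1(x)=12x(1-x)^2\) and \(f_2(x)=3(1-x)^2\), so \(f_1/f_2=4x\) is increasing and \Cref{ass:LR} holds. Yet \(\Pbb(b_2\ge c)=(1-c)^3\) and \(\Pbb(b_1\ge c)=4(1-c)^3-3(1-c)^4\), so
\[
  \phi(c)=\frac{1}{1+3c}\longrightarrow\frac14 \qquad (c\uparrow1).
\]
Thus the limit claim is false under the paper's hypotheses, and even under continuity on the closed triangle. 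What fails is exactly your lower bound \(m\), since \(f\) vanishes along \(b_1=1\).

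The assumption to state is the one your bounds actually use. For example, require \(f\) to be continuous and strictly positive on the closed triangle, so that \(0<m\le f\le M\). Then your route (b) gives \(\phi(c)\le 2M(1-c)/m\to0\), without needing \Cref{ass:LR}.
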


The limits and monotonicity characterize how the regions in \Cref{fig:phi} change as \(c\) varies. At \(c=0\), everyone who attacks \(\Vone\) also finds \(\Vtwo\) profitable. As \(c\) rises, \Cref{ass:LR} guarantees that the probability of the upper-right ``both targets'' region shrinks faster than that of the region where \(\Vone\) alone is profitable.

\begin{proposition}[The cost of division]\label{prop:cost}
Suppose \Cref{ass:victims,ass:positive-selection,ass:LR} hold, and suppose \(\phi(c)\) is strictly decreasing. Let \(\hatc\) solve
\[
  \phi(\hatc)=\kappa.
\]
Then \DC\ exists if and only if \(c\ge \hatc\). The selected equilibrium response is collective resistance for \(c<\hatc\) and \DC\ for \(c>\hatc\).

As \(c\) crosses \(\hat c\) from below, the selected initial-attack set changes from \(J(c)\) to \(\{b_1\ge c\}\). If \(p_2<2c/(1+c)\), the set of types for which \(\Vtwo\) is attacked also expands from \(J(c)\) to \(\{b_2\ge c\}\).
\end{proposition}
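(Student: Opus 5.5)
The plan is to reduce everything to \Cref{thm:baseline} and \Cref{lem:LRphi}. First I would establish that \(\hatc\) is well defined and unique. By \Cref{lem:LRphi}, \(\phi(0)=1>\kappa\) and \(\lim_{c\uparrow1}\phi(c)=0<\kappa\), using \(\kappa\in(0,1)\) from \Cref{ass:victims}. Next I need continuity of \(\phi\). Here
\[
\phi(c)=\frac{\mu(b_2\ge c)}{\mu(b_1\ge c)},
\]
and both numerator and denominator are continuous in \(c\) because \(\mu\) has a density. The denominator is strictly positive for \(c<1\). The intermediate value theorem then yields a root \(\hatc\in(0,1)\), and strict monotonicity makes it unique. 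Strict monotonicity also gives \(\phi(c)\le\kappa\iff c\ge\hatc\). By part (i) of \Cref{thm:baseline}, this is exactly the statement that \DC\ exists if and only if \(c\ge\hatc\).

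For the selection claim I would use the classification in \Cref{thm:baseline}, which is available because \Cref{ass:positive-selection} holds. For \(c<\hatc\) we have \(\phi(c)>\kappa\), so only collective resistance exists and it is the selected equilibrium. For \(c>\hatc\) we have \(\phi(c)<\kappa\), so \DC\ exists, either alone or together with collective resistance. The selection rule in \Cref{sec:interp} then picks \DC\ in both cases. I would also note that collective resistance exists at \(c<\hatc\) without any further work: \Cref{ass:positive-selection} gives \(\psi(c)\ge\phi(c)>\kappa\).

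The attack-set statements follow by reading off the equilibrium strategies in \Cref{thm:baseline}. In the selected equilibrium below \(\hatc\), the initial-attack set is \(J(c)\); above \(\hatc\), it is \(\{b_1\ge c\}\). For the set of types that attack \(\Vtwo\):
\begin{enumerate}
\item Under collective resistance, \(\Vtwo\) is attacked, through the joint fight, exactly on \(J(c)\).
\item Under \DC, \(\Vtwo\) is attacked exactly on \(\{b_2\ge c\}\cap B\). This set is automatically inside \(\{b_1\ge c\}\) because \(b_1\ge b_2\).
\end{enumerate}

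The step needing the most care is the expansion claim \(J(c)\subseteq\{b_2\ge c\}\) when \(p_2<2c/(1+c)\). I would argue it directly. Suppose \((b_1,b_2)\in J(c)\) with \(b_2<c\). Since \(b_1\le1\), we get \(p_2(b_1+b_2)<p_2(1+c)<2c\), which contradicts membership in \(J(c)\). Hence \(J(c)\subseteq\{b_2\ge c\}\).

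The inclusion is strict: \(\{b_2\ge c\}\setminus J(c)\) contains the set \(\{c\le b_2\le b_1,\ b_1+b_2<2c/p_2\}\). This set has nonempty interior because \(2c/p_2>2c\), so it carries positive mass since \(f>0\) on the interior of \(B\). Comparing sets across the threshold only requires evaluating both sets at \(c\) near \(\hatc\). Any tie at \(c=\hatc\) itself is the payoff-irrelevant tie-breaking allowed in \Cref{thm:baseline}.
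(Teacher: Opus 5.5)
Your proposal is correct and follows essentially the same route as the paper. You use \Cref{lem:LRphi} plus strict monotonicity to get a unique $\hatc$, then apply the classification in \Cref{thm:baseline} together with \Cref{ass:positive-selection} and attacker-preferred selection, and prove $J(c)\subseteq\{b_2\ge c\}$ via $p_2(b_1+b_2)<p_2(1+c)<2c$. Your continuity and intermediate-value argument for existence of $\hatc$, and your positive-mass argument for strict inclusion, fill in steps the paper leaves implicit. One small quibble concerns $c=\hatc$: there both equilibria exist, since $\psi(\hatc)\ge\phi(\hatc)=\kappa$, and attacker-preferred selection picks \DC. So this is a selection matter, not the ``payoff-irrelevant tie-breaking at cutoff types'' of \Cref{thm:baseline}, though the proposition makes no claim at that point anyway.
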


The proposition highlights a nonmonotonic effect of attack costs. Holding beliefs fixed, a higher cost can only reduce the set of attacker types willing to strike. The opposite force arises because an observed attack is informative: a costly attack selects a different set of types and may therefore be interpreted differently by the bystander. When this informational effect changes the selected equilibrium, a higher-cost environment can sustain \DC\ and make an attack more likely.

\begin{figure}[H]
\centering
\begin{minipage}{0.49\textwidth}
\centering
\begin{tikzpicture}[scale=6.0, font={\fontsize{11pt}{13pt}\selectfont}]
  \def\c{0.24}
  \begin{scope}
    \clip (0,0)--(1,0)--(1,1)--cycle;
    \fill[colCR!50] (.80,.80)--(1,.60)--(1,1)--cycle;
  \end{scope}
  \draw[->] (0,0) -- (1.08,0) node[right] {$b_1$};
  \draw[->] (0,0) -- (0,1.08) node[above] {$b_2$};
  \draw[very thick,colCR!90!black] (.80,.80)--(1,.60);
  \draw[dashed] (\c,0)--(\c,\c)--(1,\c);
  \draw (0,0)--(1,1)--(1,0);
  \node[below] at (\c,0) {$c<\hat c$};
  \node[left] at (0,\c) {$c$};
  \node[black] at (.92,.80) {CR};
  \node at (.5,-.17) {Collective resistance};
\end{tikzpicture}
\end{minipage}\hfill
\begin{minipage}{0.49\textwidth}
\centering
\begin{tikzpicture}[scale=6.0, font={\fontsize{11pt}{13pt}\selectfont}]
  \def\c{0.38}
  \begin{scope}
    \clip (0,0)--(1,0)--(1,1)--cycle;
    \fill[colDCone!50] (\c,0) rectangle (1,1);
    \fill[colDCboth!55] (0,\c) rectangle (1,1);
  \end{scope}
  \draw[->] (0,0) -- (1.08,0) node[right] {$b_1$};
  \draw[->] (0,0) -- (0,1.08) node[above] {$b_2$};
  \draw[densely dotted] (0,0)--(1,1);
  \draw[dashed] (\c,0)--(\c,\c)--(1,\c);
  \draw (0,0)--(1,1)--(1,0);
  \node[below] at (\c,0) {$c>\hat c$};
  \node[left] at (0,\c) {$c$};
  \node[align=center,black] at (.75,.20) {\DC\ ($\Vone$ only)};
  \node[align=center,black] at (.75,.50) {\DC\ (both)};
  \node at (.5,-.17) {Divide-and-conquer};
\end{tikzpicture}
\end{minipage}
\caption{The cost of division. The left panel depicts a representative case with $c<\hat c$: the selected equilibrium is collective resistance, and only the small CR attack set $J(c)$ (red) attacks. The right panel depicts a representative case with $c>\hat c$: the selected equilibrium is \DC. All types with $b_1\ge c$ attack $\Vone$ first (light and dark yellow combined), and the dark yellow subset with $b_2\ge c$ goes on to attack $\Vtwo$. With a small $p_2$, an initial attack is much more likely under \DC\ than under collective resistance; in addition, the probability that an attacker ultimately targets both victims in the right panel can exceed the probability of the entire attack set in the left panel.}
\label{fig:cost-shift}
\end{figure}

\Cref{fig:cost-shift} shows the switch. For \(c<\hatc\), the attack is sufficiently informative of systemic intent that \(\Vtwo\) joins; the attacker therefore attacks only if a two-front war is profitable. For \(c>\hatc\), \(\Vtwo\) stays out, \(\Vone\) surrenders, and every type with \(b_1\ge c\) attacks \(\Vone\). Since \(J(c)\subseteq\{b_1\ge c\}\), the initial-attack probability rises when the selected response changes from collective resistance to \DC. The increase can be large when \(p_2\) is small, because collective resistance leaves only the small set of types willing to fight a two-front war.

The probability that \(\Vtwo\) is eventually attacked can also rise. Under collective resistance, \(\Vtwo\) is drawn into conflict only for types in \(J(c)\). Under \DC, she is attacked whenever \(b_2\ge c\). If \(p_2<2c/(1+c)\), then \(J(c)\subseteq\{b_2\ge c\}\), with strict inclusion under positive density. Thus the same equilibrium switch can increase not only the chance that \(\Vone\) is attacked, but also the chance that \(\Vtwo\) is eventually attacked.

Interpreting \(c\) as moral or reputational cost gives the result a historical reading. When aggression is viewed as norm-breaking or ``unthinkable,'' a first attack may be rationalized as exceptional rather than systemic. The remilitarization of the Rhineland in 1936 and the annexation of Crimea in 2014 were both framed by aggressors as particular claims rather than transparent announcements of unconstrained campaigns \citep{USHMMRhineland,UNGA68262}. The model does not require bystanders to be correct in accepting such frames; it shows how a costly first move can make a localized interpretation more plausible.

\subsection{Comparative static II: correlation}\label{sec:correlation}

The second comparative static concerns how connected the victims' fates appear to be. If the attacker's benefits from the two victims are highly correlated, an attack on one is strong evidence that the other is also at risk. If the benefits are weakly correlated, the same attack is easier to interpret as particularized.

To express this cleanly, use the order-statistic microfoundation but allow the primitive values \((X_1,X_2)\) to be dependent. They have a common marginal distribution \(F\) and a copula \(C_\rho\):
\[
  \Pbb(X_1\le x_1,X_2\le x_2)=C_\rho(F(x_1),F(x_2)).
\]
The copula separates marginal attractiveness from dependence. The parameter \(\rho\) measures how tightly the two target values move together while holding each marginal fixed. Higher \(\rho\) means that if one target is valuable to the attacker, the other is more likely to be valuable as well. Let \(b_1=\max\{X_1,X_2\}\) and \(b_2=\min\{X_1,X_2\}\). Since \(b_2\ge c\) means that both primitive values exceed \(c\), while \(b_1\ge c\) means that at least one does, the posterior is:
\begin{equation}\label{eq:copulaphi}
  \phi_\rho(c)=\frac{1-2F(c)+C_\rho(F(c),F(c))}{1-C_\rho(F(c),F(c))}.
\end{equation}

\begin{proposition}[Correlation and the Niemöller inference]\label{prop:correlation}
Suppose the marginal distribution \(F\) is fixed and \(C_\rho(u,u)\) is increasing in \(\rho\) for every \(u\in(0,1)\). Then \(\phi_\rho(c)\) is increasing in \(\rho\) for every \(c\). Hence higher correlation makes \DC\ harder to sustain, and lower correlation makes \DC\ easier to sustain.
\end{proposition}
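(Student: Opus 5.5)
The plan is to treat \(\phi_\rho(c)\) in \eqref{eq:copulaphi} as a function of the single scalar \(t:=C_\rho(F(c),F(c))\), holding \(u:=F(c)\) fixed. Because \(F\) does not depend on \(\rho\), all dependence on \(\rho\) passes through \(t\). Since \(C_\rho(u,u)\) is increasing in \(\rho\) by hypothesis, it is enough to show that
\[
  g(t):=\frac{1-2u+t}{1-t}
\]
is increasing in \(t\) on the feasible range of \(t\).

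First I will confirm that \eqref{eq:copulaphi} is the right object. The event \(\{b_2\ge c\}=\{X_1\ge c,X_2\ge c\}\) has probability \(1-2F(c)+C_\rho(F(c),F(c))\) by inclusion–exclusion. The event \(\{b_1\ge c\}\) is the complement of \(\{X_1<c,X_2<c\}\) and has probability \(1-C_\rho(F(c),F(c))\). I will take \(F\) continuous so that strict and weak inequalities coincide.

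Next, differentiate \(g\) (or compare \(g(t')\) and \(g(t)\) directly for \(t'>t\)):
\[
  g'(t)=\frac{(1-t)+(1-2u+t)}{(1-t)^2}=\frac{2(1-u)}{(1-t)^2}>0
\]
whenever \(u<1\) and \(t<1\). For \(u\in(0,1)\) the Fréchet bounds give \(t\le u<1\), so the denominator is strictly positive and \(g\) is strictly increasing. The boundary cases need separate treatment. If \(u=0\), then \(t=0\) and \(\phi_\rho(c)=1\). If \(u=1\), the conditioning event is null, so the statement is vacuous or handled by convention. In both cases \(\phi_\rho\) is trivially weakly increasing in \(\rho\).

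The final step converts this into the claim about divide-and-conquer. By \Cref{thm:baseline}, \DC\ exists if and only if \(\phi_\rho(c)\le\kappa\), and \(\kappa\) does not depend on \(\rho\). Since \(\phi_\rho(c)\) is increasing in \(\rho\), the set of \(\rho\) satisfying this inequality is a lower set: if \DC\ exists at \(\rho\), it exists at every \(\rho'<\rho\). Equivalently, for each \(\rho\) the set of costs \(c\) at which \DC\ exists shrinks as \(\rho\) rises. When \(\phi_\rho\) is strictly decreasing in \(c\) as in \Cref{prop:cost}, the threshold \(\hatc(\rho)\) therefore rises with \(\rho\). There is no real obstacle. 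The only care needed is at the degenerate values of \(u\) and in stating that "harder to sustain" means set inclusion of the parameter regions where \eqref{eq:DC-condition} holds.
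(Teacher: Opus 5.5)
Your proposal is correct and follows the paper's proof: write \(\phi_\rho(c)\) as \((1-2u+C)/(1-C)\) with \(u=F(c)\) and \(C=C_\rho(u,u)\), and note that \(\partial/\partial C=2(1-u)/(1-C)^2>0\). The additions you make are harmless refinements the paper leaves implicit: the Fréchet bound guaranteeing \(1-C>0\), the degenerate cases \(u\in\{0,1\}\), and the explicit translation to the \DC\ condition \(\phi_\rho(c)\le\kappa\).
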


The condition is upper-tail dependence along the diagonal. It holds in many standard positively ordered copula families \citep{Nelsen2006,Joe2014}. A particularly transparent example is the common-shock mixture
\[
  C_\rho(u,v)=(1-\rho)uv+\rho\min\{u,v\},\qquad \rho\in[0,1].
\]
With probability \(1-\rho\), the two target values are independent; with probability \(\rho\), they are perfectly rank-linked. If \(s(c)=1-F(c)\), then
\[
  \phi_\rho(c)=\frac{\rho+(1-\rho)s(c)}{(2-\rho)-(1-\rho)s(c)}.
\]
At \(\rho=0\), this is the independent benchmark \((1-F(c))/(1+F(c))\). At \(\rho=1\), \(\phi_\rho(c)=1\): an attack on one victim perfectly reveals that the other is also a target.

The parameter \(\rho\) captures perceived linkage among victims' fates. When \(\rho\) is low, an attack on one victim is easier to interpret as target-specific. When \(\rho\) is high, the same attack is stronger evidence of a broader campaign. This is the Niemöller inference: collective resistance is easier to sustain when victims understand their risks as linked.

R\'akosi's ``salami tactics'' in Hungary illustrate the low-correlation inference. Each faction could view the attack on another faction as a separate political struggle, rather than as evidence that the whole opposition would eventually be eliminated \citep{TimeRakosi1947,TNSRSalami2021}. The sequential consolidation of British power over weakly coordinated Indian states has the same structure: each polity faced incentives to treat British intervention elsewhere as a localized dispute rather than as evidence of a common imperial project \citep{BritannicaEIC,BritannicaPrincelyStates}. Counter-propaganda works in the opposite direction by raising perceived linkage. An attack on Ukraine, for example, may be framed not as a special dispute over Ukraine but as evidence of broader imperial ambition  \citep{Zelenskyy2023NDU}.

\subsection{Comparative static III: asymmetry and specialness}\label{sec:asymmetry}

The third comparative static concerns the perceived specialness of \(\Vone\). Let \(a\ge0\) measure an additional perceived benefit from conquering \(\Vone\). The bystander thinks the first attack occurs when \(b_1+a\ge c\), while the benefit from \(\Vtwo\) remains \(b_2\). The posterior after an attack on \(\Vone\) is
\begin{equation}\label{eq:phia}
  \phi_a(c):=\Pbb(b_2\ge c\mid b_1+a\ge c).
\end{equation}
Because \(b_2\ge c\) implies \(b_1\ge c\) on \(B\),
\[
  \phi_a(c)=\frac{\Pbb(b_2\ge c)}{\Pbb(b_1\ge c-a)},
\]
with the convention that \(\Pbb(b_1\ge c-a)=1\) when \(c-a\le0\).

\begin{proposition}[Specialness facilitates divide-and-conquer]\label{prop:asymmetry}
In the baseline, \(\phi_a(c)\) is weakly decreasing in \(a\). Consequently, the set of parameters for which \DC\ exists expands as the perceived specialness of \(\Vone\) rises.
\end{proposition}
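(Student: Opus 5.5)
The argument works directly from the closed form for \(\phi_a(c)\) displayed just before the statement. First I would justify that formula. On \(B\) we have \(b_2\le b_1\), so the event \(\{b_2\ge c\}\) is contained in \(\{b_1\ge c\}\). Since \(a\ge 0\), that event is in turn contained in \(\{b_1\ge c-a\}=\{b_1+a\ge c\}\). Hence the joint probability in the conditional equals \(\Pbb(b_2\ge c)\), which gives
\[
  \phi_a(c)=\frac{\Pbb(b_2\ge c)}{\Pbb(b_1\ge c-a)}.
\]
The numerator does not depend on \(a\). The denominator is the survival function of \(b_1\) evaluated at \(c-a\). It is weakly increasing in \(a\) because \(c-a\) falls as \(a\) rises, and it equals \(1\) once \(c-a\le 0\), matching the paper's convention. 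It is strictly positive whenever \(c<1+a\), and in the degenerate case where the denominator vanishes the numerator also vanishes, so I would set \(\phi_a(c)=0\) there. Therefore \(\phi_a(c)\) is weakly decreasing in \(a\) for every \(c\).

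The second step is to translate this monotonicity into the statement about the existence region. I would extend \Cref{thm:baseline}(i) to the perceived-specialness environment. Suppose \(\Vtwo\) stays out after a fight by \(\Vone\). Then \(\Vone\) surrenders by \Cref{ass:victims}. The bystander's perceived attack set is \(\{b_1+a\ge c\}\), so regularity yields the posterior \(\phi_a(c)\). The same comparison of \(\phi_a(c)\ell\) with \(p_2\ell+d\) then shows that \DC\ exists if and only if \(\phi_a(c)\le\kappa\). If \(a'\ge a\) and \(\phi_a(c)\le\kappa\), then \(\phi_{a'}(c)\le\phi_a(c)\le\kappa\). So the set \(\{(c,\kappa,\dots):\phi_a(c)\le\kappa\}\) is nested and expands with \(a\).

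The only point needing care is the one-line adaptation of the \DC\ existence condition of \Cref{thm:baseline} to this environment, because \(a\) enters only the bystander's perception of which types attack. I would handle it by noting that the proof of \Cref{thm:baseline}(i) uses only the attack set that the bystander conditions on. The bystander's \DC\ incentive therefore depends only on the posterior \(\phi_a(c)\), and the victims' other incentives are unchanged by \(a\). Beyond that, the proof is elementary.
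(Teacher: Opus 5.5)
Your proposal is correct and takes the paper's approach. The containment \(\{b_2\ge c\}\subseteq\{b_1\ge c\}\subseteq\{b_1+a\ge c\}\) on \(B\) reduces \(\phi_a(c)\) to \(\Pbb(b_2\ge c)/\Pbb(b_1\ge c-a)\), whose numerator does not depend on \(a\) and whose denominator is weakly increasing in \(a\). Your second step, which re-derives the \DC\ condition \(\phi_a(c)\le\kappa\) from \Cref{thm:baseline}(i) to obtain nested existence regions, is left implicit in the paper, but it is the intended reading of ``consequently.''
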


The interpretation is direct. The more \(\Vone\) can be portrayed as an exceptional target, the less an attack on \(\Vone\) alarms \(\Vtwo\). In the Munich Agreement, Hitler's claim about the Sudetenland was framed around the status of ethnic Germans and an allegedly exceptional territorial grievance; the agreement ceded the region to Nazi Germany in 1938 \citep{USHMMMunich}. In the model's language, the frame raised the perceived specialness \(a\), keeping the bystander's posterior below the danger threshold \(\kappa\). Counter-framing tries to reverse this by insisting that the attack is not exceptional but systemic.

\section{Richer Theories of Divide-and-Conquer}\label{sec:extensions}

The baseline model fixes a simple informational environment:  the victims are Bayesian, and no prior institution changes the cost of joining or staying out. This section considers four extensions. First, behavioral responses
change how the bystander acts on the danger posterior. Second, rhetoric and propaganda
change the information on which the bystander conditions. Third, treaty commitments alter
both payoffs and inference. Fourth, with more than two victims, downstream protection can
make an intermediate victim less willing to defend the first one. Each extension changes one
component of the environment and records the corresponding equilibrium implication; the
proofs collect the algebra behind the derived cutoffs.

\subsection{Behavioral responses}
\label{sec:behavioral}

A Bayesian bystander stays out after an attack on \(\Vone\) if and only if
\(\phi(c)\le\kappa\).  Behavioral responses can affect this comparison in two ways.  The
first changes the belief the bystander allows herself to hold.  The second changes the
perceived consequence of staying out.  We use stripped-down formulations because the goal
is not to build a separate theory of motivated cognition, but to see how such cognition
enters the same posterior-threshold logic.

Under \emph{wishful thinking}, beliefs are partly consumption goods in the sense of
\citet{BrunnermeierParker2005}.  Believing that one is safe is comforting; believing that
one is next is painful.  After observing the attack, \(\Vtwo\) may choose a subjective
posterior \(m\in[\zeta\phi(c),\phi(c)]\), where \(\zeta\in[0,1]\).  The parameter \(\zeta\)
measures discipline: \(\zeta=1\) gives Bayesian beliefs, while a lower \(\zeta\) permits more
optimism.  If \(a\in\{S,J\}\) denotes staying out or joining, \(\Vtwo\)'s loss criterion is
\(L(a;\phi(c))+\theta L(a;m)\), where \(\theta\ge0\) is the weight on anticipatory utility.
Throughout this subsection, when a behavioral cutoff is invoked, we focus on the interior
case in which the cutoff exists.

\begin{proposition}[Wishful thinking]
\label{prop:wishful-thinking}
Assume \Cref{ass:LR}, \(\phi\) strictly decreasing, \(\theta>0\), and \(\zeta<1\). Under wishful thinking, there exists \(\hat c^{WT}<\hatc\) such that \DC\ is sustained if and only if \(c\ge \hat c^{WT}\).
\end{proposition}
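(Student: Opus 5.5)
The plan is to reduce wishful thinking to a modified danger threshold and then reuse the monotonicity argument behind \Cref{prop:cost}. First I write \(\Vtwo\)'s loss from each action as a function of a posterior \(q\): \(L(S;q)=q\ell\) and \(L(J;q)=p_2\ell+d=\kappa\ell\). The choice of subjective belief \(m\) is made jointly with the action. Since \(L(J;\cdot)\) does not depend on the belief, the belief only matters under staying out. There \(\Vtwo\) picks the most optimistic admissible belief, \(m=\zeta\phi(c)\).

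The total criterion from staying out is then \((1+\theta\zeta)\phi(c)\ell\). The criterion from joining is \((1+\theta)\kappa\ell\), since under joining any belief in the interval gives the same anticipatory loss \(\kappa\ell\). So staying out is optimal if and only if
\[
\phi(c)\le \kappa^{WT}:=\frac{1+\theta}{1+\theta\zeta}\,\kappa .
\]
Because \(\theta>0\) and \(\zeta<1\), we have \(\kappa^{WT}>\kappa\).

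The rest of the equilibrium logic is unchanged. Given that \(\Vtwo\) stays out, \(\Vone\) surrenders by \Cref{ass:victims}, and \(\A\) attacks \(\Vone\) iff \(b_1\ge c\). Regularity then gives \(\Vtwo\) the Bayesian posterior \(\phi(c)\) after an off-path fight, exactly as in \Cref{thm:baseline}(i). Hence \DC\ is sustained if and only if \(\phi(c)\le\kappa^{WT}\).

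By \Cref{lem:LRphi}, \(\phi\) is continuous, strictly decreasing by hypothesis, and satisfies \(\phi(0)=1\) and \(\phi(c)\to0\) as \(c\uparrow1\). In the interior case \(\kappa^{WT}<1\), the intermediate value theorem gives a unique \(\hat c^{WT}\) with \(\phi(\hat c^{WT})=\kappa^{WT}\). Monotonicity then yields that \DC\ holds iff \(c\ge\hat c^{WT}\). Finally, \(\phi(\hat c^{WT})=\kappa^{WT}>\kappa=\phi(\hatc)\), and strict decrease of \(\phi\) gives \(\hat c^{WT}<\hatc\).

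The main obstacle is the modeling step of the joint choice of \(m\) and the action. I need to confirm that the agent optimizes the belief given the action, so that optimism is exploited only when staying out. I also need the optimized criteria to take the linear form above. If the paper instead has \(m\) chosen before the action, I would treat the problem as a min over \(m\) of the min over actions and check that it gives the same cutoff, since the joint minimization is order-independent. I also need continuity of \(\phi\) for the intermediate value step; I would derive it from continuity of \(f\).
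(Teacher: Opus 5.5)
Your proposal is correct and follows the paper's proof essentially step for step. Setting \(m=\zeta\phi(c)\) under staying out gives the inflated threshold \(\kappa^{WT}=(1+\theta)\kappa/(1+\theta\zeta)>\kappa\), and strict monotonicity of \(\phi\) then delivers a cutoff \(\hat c^{WT}<\hatc\); your remarks on continuity of \(\phi\) for the intermediate-value step and on the order of minimizing over \(m\) and the action fill in details the paper leaves implicit.
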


Wishful thinking expands the D\&C region by making the bystander less willing to act on the Bayesian posterior. The bystander understands that staying out can be materially dangerous, but she also receives anticipatory benefit from entertaining a reassuring belief. Since the most reassuring feasible belief is \(m=\zeta\phi(c)\), the subjective component of the decision problem tilts the comparison toward inaction. The verification compares the optimized loss from staying out with the loss from joining and shows that the effective danger threshold is higher than \(\kappa\) whenever \(\theta>0\) and \(\zeta<1\).

The result creates an intermediate region \([\hat c^{WT},\hatc)\). In this region, the attack is objectively alarming enough that a Bayesian bystander would join, but not so alarming that wishful thinking is disciplined away. The bystander can still sustain the thought that the first attack is localized. This is precisely the psychological margin on which divide-and-conquer benefits: the first victim is exposed not because the second victim is indifferent, but because the second victim finds the systemic interpretation sufficiently painful to resist.

Under \emph{magical thinking}, the distortion is not a direct change in the posterior over
types.  It is a perceived causal effect of the bystander's own action.  The bystander
imagines that staying out places her in the attacker's good graces.  In the material game,
neutrality does not reduce \(\A\)'s benefit from conquering \(\Vtwo\).  A magical thinker
acts as if it does.  We represent this by a credit \(\omega\ge0\): if \(\Vtwo\) stays out, she
perceives the future-attack condition to be \(b_2\ge c+\omega\) rather than \(b_2\ge c\).

\begin{proposition}[Magical thinking]
\label{prop:magical-thinking}
Under magical thinking with \(\omega>0\), \DC\ is sustained whenever it is sustained in the Bayesian benchmark. Moreover, there exists \(\hat c^{MT}_\omega<\hatc\) such that \DC\ is sustained for every \(c\ge\hat c^{MT}_\omega\).
\end{proposition}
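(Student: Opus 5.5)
Under magical thinking, the only change to the Bayesian analysis of \Cref{thm:baseline} is in \(\Vtwo\)'s perceived loss from staying out. The attacker's behavior and the first victim's incentives are unchanged. So I would first fix the \DC\ candidate exactly as in the baseline: \(\A\) attacks \(\Vone\) iff \(b_1\ge c\), \(\Vone\) surrenders, and \(\A\) attacks \(\Vtwo\) iff \(b_2\ge c\). Regularity pins down \(\Vtwo\)'s belief after an off-path fight by \(\Vone\) as \(\mu(\cdot\mid b_1\ge c)\). The magical thinker then perceives her loss from staying out as \(\phi^{MT}_\omega(c)\ell\), where
\[
  \phi^{MT}_\omega(c):=\Pr(b_2\ge c+\omega\mid b_1\ge c),
\]
while joining still costs \(\kappa\ell\). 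Given that she stays out, \(\Vone\)'s lone fight is worse than surrender by \Cref{ass:victims}, and \(\A\)'s best response is the baseline one. So \DC\ is sustained iff \(\phi^{MT}_\omega(c)\le\kappa\).

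For the first claim, note that \(\{b_2\ge c+\omega\}\subseteq\{b_2\ge c\}\), so \(\phi^{MT}_\omega(c)\le\phi(c)\) for every \(c\). Hence whenever the Bayesian condition \(\phi(c)\le\kappa\) holds, the magical-thinking condition holds too. This step is purely set inclusion and needs no distributional assumption.

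For the cutoff, I would invoke \Cref{prop:cost}, which requires \Cref{ass:victims,ass:positive-selection,ass:LR} and \(\phi\) strictly decreasing, as in \Cref{prop:wishful-thinking}. By that result, \(\phi(c)\le\kappa\) exactly for \(c\ge\hatc\). I would then define
\[
  \hat c^{MT}_\omega:=\inf\{c':\phi^{MT}_\omega(c)\le\kappa\ \text{for all }c\ge c'\}.
\]
The previous paragraph gives \(\hat c^{MT}_\omega\le\hatc\), and \DC\ is sustained for all \(c\) above it by construction. I deliberately avoid claiming that \(\phi^{MT}_\omega\) is monotone: the statement only asserts "for every \(c\ge\hat c^{MT}_\omega\)," not "iff," so this tail definition suffices.

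The main obstacle is strictness, \(\hat c^{MT}_\omega<\hatc\). I would argue as follows. At \(c=\hatc\), the density is strictly positive on the interior of \(B\), so the set \(\{c\le b_2<c+\omega\}\cap\{b_1\ge c\}\) has positive mass, since \(\hatc<1\) because \(\kappa>0\) and \(\phi\to0\). Therefore \(\phi^{MT}_\omega(\hatc)<\phi(\hatc)=\kappa\). Continuity of \(\phi^{MT}_\omega\) in \(c\), which follows from the continuous density and dominated convergence, then gives \(\phi^{MT}_\omega(c)<\kappa\) on some interval \((\hatc-\varepsilon,\hatc]\). For \(c\ge\hatc\), the inequality \(\phi^{MT}_\omega\le\phi\le\kappa\) already holds. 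Together these show that the tail condition holds from \(\hatc-\varepsilon\) on, so \(\hat c^{MT}_\omega\le\hatc-\varepsilon<\hatc\).
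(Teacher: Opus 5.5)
Your proposal is correct and follows essentially the same route as the paper. Both use the same perceived posterior \(\Pr(b_2\ge c+\omega\mid b_1\ge c)\), set inclusion for the first claim, a strict inequality at \(\hatc\) from positive density extended slightly below \(\hatc\) by continuity, and the same tail-infimum definition of \(\hat c^{MT}_\omega\); your extra justifications (\(\hatc<1\) because \(\kappa>0\) and \(\phi\to0\), and continuity of the posterior) only fill in details the paper leaves implicit.
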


Magical thinking operates through a different channel. The bystander does not merely shade the probability of danger; she misperceives the causal effect of her own neutrality. The perceived credit \(\omega\) raises the threshold for a later attack from \(c\) to \(c+\omega\). Under the maintained positive-density assumption, this strictly lowers the perceived posterior at the Bayesian cutoff, so the range of costs supporting D\&C extends strictly below \(\hatc\).

Wishful thinking and magical thinking affect different margins. The former changes the belief the bystander permits herself to hold; the latter changes the perceived consequence of staying out. Counter-propaganda is therefore directed at both margins: it makes the comforting posterior harder to maintain and challenges the causal claim that neutrality buys safety.

Munich is a useful but imperfect historical illustration of the first margin. Chamberlain believed that Britain could negotiate with Hitler in good faith and hoped that satisfying some Nazi demands---most visibly by conceding the Sudetenland in exchange for a promise of no further territorial claims---would prevent a wider war \citep{USHMMChamberlain,NationalArchivesMunich}. This made it possible to treat the crisis as a limited grievance rather than as a signal of broader ambition.\footnote{Note this should not be read as a pure case of psychological naivete: British appeasement also reflected military, domestic, imperial, and diplomatic constraints, and \citet{RipsmanLevy2008} argue that it was largely a strategy of buying time. The behavioral point is narrower. Wishful thinking captures the force that makes a limited-grievance interpretation attractive when recognizing the systemic interpretation is painful.}

Neutrality provides a closer analogy for magical thinking. Dutch neutrality had been the country's foreign policy for a century and had kept the Netherlands out of World War I; Germany invaded nonetheless in May 1940 \citep{AnneFrankNetherlands}. Belgium similarly adopted an independent foreign policy in 1936 in the hope of avoiding war, but Germany invaded Belgium on May 10, 1940 \citep{BelgiumMonarchy}. The point is not that neutrality was irrational. It reflected past experience and strategic constraints. But these cases illustrate the same perceived causal margin---the hope that staying out of a conflict will itself preserve safety, even when the aggressor's material incentive to attack is not thereby removed.

\subsection{Rhetoric and propaganda}\label{sec:rhetoric}

Rhetoric matters in two distinct ways. It may convey hard information about the attacker's type, or it may provide a psychologically usable explanation for inaction. We treat the first channel in the formal result and return to the second after the proposition. For the particularizing signal, one interpretation is coarse verifiable disclosure. After attacking \(\Vone\), \(\A\) may produce verifiable evidence that supports a localized grievance: historical documents, territorial maps, administrative records, evidence of a dispute-specific incident, or observable facts showing that the conflict is directed at \(\Vone\). Types outside the relevant cell cannot produce the same evidence. Silence is then interpreted as the complementary cell, as in the classic voluntary-disclosure logic of \citet{Grossman1981}, \citet{Milgrom1981}, and \citet{MilgromRoberts1986}.  A systemic signal is naturally interpreted as evidence not controlled by the attacker, such as intelligence, military preparations, leaks, or public arguments linking the first attack to a broader campaign.

Within \(B\), fix \(\Delta>0\), and let
\[
  L_\Delta:=\{(b_1,b_2)\in B:b_1-b_2>\Delta\},
  \qquad
  H_\Delta:=\{(b_1,b_2)\in B:b_1-b_2\le\Delta\}.
\]
A particularizing signal reveals \(L_\Delta\); it says that \(\A\) has a substantially stronger reason to conquer \(\Vone\) than \(\Vtwo\). A systemic signal reveals \(H_\Delta\); it says that the two target values are close. The former lowers the bystander's posterior risk, while the latter raises it. The proposition records the corresponding interval statements for propaganda and counter-propaganda.\footnote{The construction has a fixed-point character because the posterior in a signal cell depends on the continuation assigned to that cell. The verification is nevertheless well defined. Once a cell is assigned to \DC, the relevant attack pool is \(\{b_1\ge c\}\) intersected with that cell; once a cell is assigned to collective resistance, the relevant attack pool is \(J(c)\) intersected with that cell. The proof checks the corresponding posterior condition for each cell.}

\begin{proposition}[Propaganda and counter-propaganda]\label{prop:propaganda-formal}
Assume \Cref{ass:positive-selection}.
\begin{enumerate}[label=(\roman*)]
\item If \(c<\hatc\), there exists \(\bar\Delta(c)\in(0,1-c)\) such that, for every \(\Delta\in(\bar\Delta(c),1-c)\), a particularizing signal induces \DC\ on \(L_\Delta\) and collective resistance on \(H_\Delta\).
\item If \(c>\hatc\), there exists \(\underline\Delta(c)>0\) such that, for every \(\Delta\in(0,\underline\Delta(c))\), a systemic signal induces collective resistance on \(H_\Delta\) and leaves \DC\ on \(L_\Delta\).
\end{enumerate}
\end{proposition}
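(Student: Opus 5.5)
The plan is to treat each signal cell as its own baseline game, with the prior \(\mu\) replaced by \(\mu(\cdot\mid L_\Delta)\) or \(\mu(\cdot\mid H_\Delta)\). The signal is verifiable and its cell is determined by the type, so \(\A\) cannot misreport. Within a cell, the argument of \Cref{thm:baseline} then goes through verbatim: regularity pins down \(\Vtwo\)'s off-path belief, \(\Vone\) surrenders whenever \(\Vtwo\) stays out by \Cref{ass:victims}, and \(\A\)'s attack sets are \(\{b_1\ge c\}\) under \DC\ and \(J(c)\) under collective resistance.

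For a cell \(K\in\{L_\Delta,H_\Delta\}\) define the cell posteriors
\[
  \phi_K(c):=\Pr(b_2\ge c\mid b_1\ge c,\,K),\qquad \psi_K(c):=\Pr(b_2\ge c\mid J(c)\cap K),
\]
with \(\psi_K\equiv1\) if \(J(c)\cap K=\emptyset\). Under attacker-preferred selection, ``\DC\ on \(K\)'' amounts to \(\phi_K(c)\le\kappa\). ``Collective resistance on \(K\)'' amounts to \(\phi_K(c)>\kappa\), so that \DC\ fails there, together with \(\psi_K(c)\ge\kappa\), so that collective resistance exists. Everything therefore reduces to limit and continuity statements for these ratios as \(\Delta\) varies. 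Each ratio is a quotient of integrals of a continuous, positive density over regions that move continuously in \(\Delta\), so it is continuous in \(\Delta\) wherever its denominator is positive.

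For (i), take \(c<\hatc\), so \(\phi(c)>\kappa\).

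\emph{The \(L\) cell.} As \(\Delta\uparrow1-c\), the numerator region \(\{b_2\ge c,\ b_1-b_2>\Delta\}\) forces \(b_1>c+\Delta\), so its mass vanishes, of order \((1-c-\Delta)^2\). The denominator region \(\{b_1\ge c,\ b_2<b_1-\Delta\}\) contains a triangle near \((1,0)\) of side roughly \(1-\Delta\to c>0\), so its mass stays bounded away from zero. Hence \(\phi_{L}(c)\to0<\kappa\).

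\emph{The \(H\) cell, \DC\ side.} Since \(\phi(c)\) is a convex combination of \(\phi_L\) and \(\phi_H\), with weights equal to the cell masses within \(\{b_1\ge c\}\), the inequality \(\phi_L<\kappa<\phi(c)\) forces \(\phi_H>\phi(c)>\kappa\). So \DC\ fails on \(H_\Delta\).

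\emph{The \(H\) cell, collective-resistance side.} This is the step I expect to be the main obstacle, because \Cref{ass:positive-selection} is stated only for the unconditional prior, not cell by cell. My plan is to avoid a cellwise version of it through a continuity argument. As \(\Delta\uparrow1-c\), the set \(J(c)\cap H_\Delta\) differs from \(J(c)\) only by a vanishing corner near \((1,0)\). Hence \(\psi_H(c)\to\psi(c)\ge\phi(c)>\kappa\), with the last inequality strict. If instead \(J(c)\) is empty, the convention gives \(\psi_H=1\) directly.

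\emph{Choosing the cutoff.} Both required inequalities then hold on some interval \((\bar\Delta(c),1-c)\). I take \(\bar\Delta(c)\) as the largest of three quantities: the supremum of the \(\Delta\) values at which either inequality fails, and a small positive number, which guarantees \(\bar\Delta(c)\in(0,1-c)\).

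For (ii), take \(c>\hatc\), so \(\phi(c)<\kappa\). Now let \(\Delta\downarrow0\).

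\emph{The \(L\) cell.} The set \(L_\Delta\) increases to \(B\) minus the diagonal, so \(\phi_L(c)\to\phi(c)<\kappa\) and \DC\ survives on \(L_\Delta\).

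\emph{The \(H\) cell, \DC\ side.} On \(H_\Delta\cap\{b_1\ge c\}\) we have \(b_2\ge c-\Delta\). The part of this strip with \(b_2<c\) has mass \(O(\Delta^2)\). The whole strip has mass of order \(\Delta(1-c)\). Hence \(\phi_H(c)\to1>\kappa\), and \DC\ fails on \(H_\Delta\).

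\emph{The \(H\) cell, collective-resistance side.} For types in \(J(c)\cap H_\Delta\) we have \(2b_2\ge b_1+b_2-\Delta\ge 2c/p_2-\Delta\). Since \(p_2<1\), this gives \(b_2>c\) once \(\Delta<2c(1/p_2-1)\). So \(\psi_H(c)=1\ge\kappa\), or \(J(c)\cap H_\Delta\) is empty and attack is fully deterred.

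\emph{Choosing the cutoff.} Taking \(\underline\Delta(c)\) small enough that all three limits have taken effect gives (ii).

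Finally, for both parts I will briefly check the remaining equilibrium conditions. Regularity inside each cell holds exactly as in the proof of \Cref{thm:baseline}. The attacker's choices within each cell are optimal given the cell's continuation.
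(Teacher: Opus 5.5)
Your route is the paper's: cellwise posteriors \(\phi_E,\psi_E\), the limit \(\Delta\uparrow 1-c\) in (i) and \(\Delta\downarrow 0\) in (ii), and in (ii) the same bound \(\Delta<2c(1/p_2-1)\) forcing \(J(c)\cap H_\Delta\subseteq\{b_2\ge c\}\). The gap is in the step you flagged as the main obstacle: the collective-resistance check on \(H_\Delta\) in (i).

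You claim that \(J(c)\cap H_\Delta\) differs from \(J(c)\) only by a vanishing corner near \((1,0)\), so that \(\psi_{H_\Delta}(c)\to\psi(c)\). This is false. As \(\Delta\uparrow 1-c\), the set \(L_\Delta\) does not vanish; your own \(L\)-cell step uses that it keeps a triangle of side about \(c\). It converges to the triangle with vertices \((1-c,0)\), \((1,0)\), \((1,c)\). Its corner \((1,c)\) has \(b_1+b_2=1+c\), so \(J(c)\) meets this triangle in a set of positive measure whenever \(p_2>2c/(1+c)\). For example, take the uniform density, \(c=0.1\), \(p_2=0.5\), \(d/\ell=0.1\). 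Then \(\phi(c)=9/11>\kappa=0.6\), and the whole limiting triangle lies in \(J(c)\). So \(\psi_{H_\Delta}(c)\) converges to \(\psi_{H_{1-c}}(c)\), which in general differs from \(\psi(c)\). Continuity alone gives you no comparison with \(\psi(c)\).

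The missing observation, which is exactly the paper's step, is that \(L_{1-c}\cap\{b_2\ge c\}=\emptyset\): \(b_2\ge c\) together with \(b_1-b_2>1-c\) would force \(b_1>1\). Hence restricting \(J(c)\) to \(H_{1-c}\) deletes only types with \(b_2<c\). That can only raise the conditional probability of \(b_2\ge c\), so \(\psi_{H_{1-c}}(c)\ge\psi(c)\ge\phi(c)>\kappa\). Continuity at \(\Delta=1-c\) (the line \(b_1-b_2=1-c\) is null) then gives \(\psi_{H_\Delta}(c)>\kappa\) on a left neighborhood of \(1-c\). If you want to keep your continuity framing, use the one-sided bound \(\psi_{H_\Delta}(c)\ge[\Pr(J(c)\cap\{b_2\ge c\})-\Pr(L_\Delta\cap\{b_2\ge c\})]/\Pr(J(c))\), whose right-hand side tends to \(\psi(c)\).

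You also go beyond the paper by checking that \DC\ fails on \(H_\Delta\). Attacker-preferred selection needs this if ``induces collective resistance'' means ``collective resistance is selected''; the paper verifies only \(\psi_{H_\Delta}(c)\ge\kappa\) on that cell. Your convex-combination argument for this in (i) is correct and needs no regularity. In (ii), however, you compare an \(O(\Delta^2)\) triangle with a strip of order \(\Delta(1-c)\). That presumes a density bounded near \((c,c)\) and bounded away from zero along the diagonal above it. The model assumes only continuity and positivity on the interior of \(B\). A density that blows up at the boundary point \((c,c)\) fast enough, while remaining integrable, makes both masses of the same order, so \(\phi_{H_\Delta}(c)\) need not tend to one. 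Either state the regularity condition explicitly or settle for the paper's weaker verification on that cell.
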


For any \(\Delta\) in the relevant interval, the post-signal attack set within the support \(B\) is
\[
  (L_\Delta\cap\{b_1\ge c\})\cup(H_\Delta\cap J(c)).
\]
This expression has different interpretations in the two cases. Propaganda adds a D\&C region below the line \(b_2=b_1-\Delta\): in that region, all types with \(b_1\ge c\) attack \(\Vone\). Counter-propaganda instead forces types above the line to face collective resistance: in that region, only those in \(J(c)\) still attack. Thus rhetoric does not simply switch the whole type space from one equilibrium to another. It changes the continuation on the part of the type space to which the signal directs attention.

\Cref{fig:propaganda-before-after} illustrates the first case. Without the signal, \(c<\hatc\), the selected equilibrium is collective resistance and only the CR set \(J(c)\) attacks. After a sufficiently strong particularizing signal, types below \(b_2=b_1-\Delta\) face D\&C. Types above the line still face collective resistance; among them, only those above the \(-45^\circ\) boundary \(p_2(b_1+b_2)=2c\) attack. The signal therefore expands the attack set by carving out a new D\&C region, while leaving the systemic region disciplined by collective resistance.

\begin{figure}[htb]
\centering
\begin{minipage}{0.49\textwidth}
\centering
\begin{tikzpicture}[scale=5.65, font={\fontsize{11pt}{13pt}\selectfont}]
  \def\c{0.24}
  \def\a{0.80}
  \def\yr{0.60}
  \draw[->] (0,0)--(1.08,0) node[right] {$b_1$};
  \draw[->] (0,0)--(0,1.08) node[above] {$b_2$};
  \draw[densely dotted] (0,0)--(1,1);
  \draw[dashed] (\c,0)--(\c,\c)--(1,\c);
  \begin{scope}
    \clip (0,0)--(1,0)--(1,1)--cycle;
    \fill[colCR!50] (\a,\a)--(1,\yr)--(1,1)--cycle;
  \end{scope}
  \draw[very thick,colCR!90!black] (\a,\a)--(1,\yr);
  \draw (0,0)--(1,1)--(1,0);
  \node[below] at (\c,0) {$c<\hat c$};
  \node[left] at (0,\c) {$c$};
  \node[black] at (.92,.82) {CR};
  \node at (.5,-.17) {Before propaganda};
\end{tikzpicture}
\end{minipage}\hfill
\begin{minipage}{0.49\textwidth}
\centering
\begin{tikzpicture}[scale=5.65, font={\fontsize{11pt}{13pt}\selectfont}]
  \def\c{0.24}
  \def\D{0.18}
  \def\a{0.80}
  \def\xi{0.89}
  \def\yi{0.71}
  \begin{scope}
    \clip (0,0)--(1,0)--(1,1)--cycle;
    \fill[colDCone!50] (\c,0)--(1,0)--(1,1-\D)--(\c,\c-\D)--cycle;
    \fill[colDCboth!55] (\c+\D,\c)--(1,\c)--(1,1-\D)--cycle;
  \end{scope}
  \begin{scope}
    \clip (0,0)--(1,0)--(1,1)--cycle;
    \fill[colCR!50] (\a,\a)--(\xi,\yi)--(1,1-\D)--(1,1)--cycle;
  \end{scope}
  
  \draw[->] (0,0)--(1.08,0) node[right] {$b_1$};
  \draw[->] (0,0)--(0,1.08) node[above] {$b_2$};
  
  \draw[very thick,dcpurple] (\D,0)--(1,1-\D);
  \draw[very thick,colCR!90!black] (\a,\a)--(\xi,\yi);
  \draw (0,0)--(1,1)--(1,0);
  \node[rotate=45] at (.68,.43) {$b_2=b_1-\Delta$};
  \draw[dashed] (\c,0)--(\c,\c)--(1,\c);
  \node[black,align=center] at (.75,.20) {\DC};
  \node[black,align=center] at (.92,.82) {CR};
  \node[below] at (\c,0) {$c$};
  \node[left] at (0,\c) {$c$};
  \node at (.5,-.17) {After propaganda};
\end{tikzpicture}
\end{minipage}
\caption{Propaganda when \(c<\hat c\). Before propaganda, only the collective-resistance attack set \(J(c)\) attacks. After a sufficiently strong particularizing signal, types below the line \(b_2=b_1-\Delta\) face \DC\ (light and dark yellow combined), while types above the line face collective resistance. Among the latter, only those above the \(-45^\circ\) boundary \(p_2(b_1+b_2)=2c\) still attack (red).}
\label{fig:propaganda-before-after}
\end{figure}

\Cref{fig:counter-before-after} illustrates the second case.  For \(c>\hatc\), without the systemic signal, D\&C prevails throughout \(\{b_1\ge c\}\cap B\). After the signal, the region above \(b_2=b_1-\Delta\) is reclassified as systemic and faces collective resistance. The attack set shrinks because types in that region attack only if they are willing to fight a two-front war. Below the line, the particularizing interpretation remains strong enough for D\&C to continue.

\begin{figure}[htb]
\centering
\begin{minipage}{0.49\textwidth}
\centering
\begin{tikzpicture}[scale=5.65, font={\fontsize{11pt}{13pt}\selectfont}]
  \def\c{0.40}

  \begin{scope}
    \clip (0,0)--(1,0)--(1,1)--cycle;
    \fill[colDCone!50] (\c,0) rectangle (1,1);
    \fill[colDCboth!55] (\c,\c) rectangle (1,1);
  \end{scope}

  \draw[->] (0,0)--(1.08,0) node[right] {\(b_1\)};
  \draw[->] (0,0)--(0,1.08) node[above] {\(b_2\)};

  \draw[dashed] (\c,0)--(\c,\c)--(1,\c);
  \draw (0,0)--(1,1)--(1,0);

  \node[below] at (\c,0) {\(c>\hat c\)};
  \node[left] at (0,\c) {\(c\)};
  \node[black,align=center] at (.75,.50) {\DC\ (both)};
  \node[black,align=center] at (.75,.20) {\DC\ (\(\Vone\) only)};
  \node at (.5,-.17) {Before counter-propaganda};
\end{tikzpicture}
\end{minipage}\hfill
\begin{minipage}{0.49\textwidth}
\centering
\begin{tikzpicture}[scale=5.65, font={\fontsize{11pt}{13pt}\selectfont}]
  \def\c{0.40}
  \def\D{0.26}
  \def\a{0.76}
  \def\xi{0.89}
  \def\yi{0.63}

  \begin{scope}
    \clip (0,0)--(1,0)--(1,1)--cycle;
    \fill[colDCone!50] (\c,0)--(1,0)--(1,1-\D)--(\c,\c-\D)--cycle;
    \fill[colDCboth!55] (\c+\D,\c)--(1,\c)--(1,1-\D)--cycle;
  \end{scope}
  \begin{scope}
    \clip (0,0)--(1,0)--(1,1)--cycle;
    \fill[colCR!50] (\a,\a)--(\xi,\yi)--(1,1-\D)--(1,1)--cycle;
  \end{scope}

  \draw[->] (0,0)--(1.08,0) node[right] {\(b_1\)};
  \draw[->] (0,0)--(0,1.08) node[above] {\(b_2\)};

  \draw[very thick,dcpurple] (\D,0)--(1,1-\D);
  \draw[very thick,colCR!90!black] (\a,\a)--(\xi,\yi);
  \draw[dashed] (\c,0)--(\c,\c)--(1,\c);
  
  \draw (0,0)--(1,1)--(1,0);

  \node[rotate=45] at (.68,.36) {\(b_2=b_1-\Delta\)};
  \node[black,align=center] at (.75,.20) {\DC};
  \node[black,align=center] at (.90,.78) {CR};
  \node[below] at (\c,0) {\(c\)};
  \node[left] at (0,\c) {\(c\)};
  \node at (.5,-.17) {After counter-propaganda};
\end{tikzpicture}
\end{minipage}
\caption{Counter-propaganda when \(c>\hat c\). Before the counter-message, \DC\ prevails throughout \(\{b_1\ge c\}\cap B\). After a sufficiently sharp systemic signal, types above the line \(b_2=b_1-\Delta\) face collective resistance; among them, only those above the \(-45^\circ\) boundary \(p_2(b_1+b_2)=2c\) still attack (red). Types below the line continue to face \DC\ (light and dark yellow combined).}
\label{fig:counter-before-after}
\end{figure}

The historical interpretation is direct. Propaganda tries to particularize the first victim: ``this case is exceptional; the grievance stops here.'' Counter-propaganda seeks to situate the first attack within a broader pattern. Nazi claims about the special status of ethnic Germans in neighboring territories and Putin's insistence that Ukraine occupies a unique place in Russian history are examples of particularizing frames \citep{USHMMMunich,Putin2021}. The opposing rhetorical strategy is to insist that the first attack reveals a larger project. Zelenskyy's warning to European audiences that Russian weapons aimed at Ukraine are already pointed at Ukraine's neighbors is an example of this systemic counter-frame \citep{Zelenskyy2023Munich}. In the language of the model, the first type of message pushes the bystander toward \(L_\Delta\), while the second pushes her toward \(H_\Delta\).

When rhetoric is not verifiable, it may still work through the behavioral mechanisms in \Cref{sec:behavioral}. A wishful bystander wants a reason to believe that the attack is localized; a particularizing story supplies that reason even if it would not persuade a detached Bayesian observer. Formally, the story relaxes the effective discipline on the subjective posterior. Magical thinking is also available: the attacker may imply that neutrality will be remembered or rewarded, inducing the bystander to act as if staying out creates the credit \(\omega\). Counter-propaganda attacks both channels by making the comforting belief harder to hold and by denying that silence changes the attacker's future incentives.

\subsection{Treaty making}\label{sec:treaties}

Treaties affect divide-and-conquer because they change both material incentives and
inference.  A treaty may lower the bystander's perceived risk of being attacked later, but
it may also change the cost of helping the first victim now.  We consider two polar cases.
A neutrality or non-aggression treaty between \(\A\) and \(\Vtwo\) raises \(\A\)'s cost of
attacking \(\Vtwo\) from \(c\) to \(c+\gamma\), and makes \(\Vtwo\)'s participation in a
fight against \(\A\) cost \(p_2\ell+d+\gamma\).  A collective-defense treaty between \(\Vone\)
and \(\Vtwo\) instead imposes \(\gamma\) on a victim who refuses to come to the other's
defense.

The first treaty is protective in a narrow bilateral sense: it makes a later attack on
\(\Vtwo\) more costly.  But that protection is itself a source of reassurance.  After an
attack on \(\Vone\), \(\Vtwo\) conditions not on \(b_2\ge c\), but on the more demanding
event \(b_2\ge c+\gamma\).  If joining \(\Vone\)'s fight also violates the treaty, both effects
move in the same direction.

\begin{proposition}[Neutrality treaties]\label{prop:treaty-A2}
Fix \(c<\hatc\). If \(0<\gamma<1-c\) and \(\Pr(b_2\ge c+\gamma\mid b_1\ge c)\ell\le p_2\ell+d+\gamma\), then a treaty between \(\A\) and \(\Vtwo\) induces a \DC\ equilibrium under attacker-preferred selection.
\end{proposition}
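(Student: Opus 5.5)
The plan is to construct the \DC\ profile in the modified game directly, check every player's sequential rationality under regular beliefs, and then argue that attacker-preferred selection picks it.

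\emph{Step 1: the candidate profile.} I would posit the following strategies.
\begin{itemize}
\item[] \(\A\) attacks \(\Vone\) if and only if \(b_1\ge c\).
\item[] After \(\Vone\) surrenders, \(\A\) attacks \(\Vtwo\) if and only if \(b_2\ge c+\gamma\), since the treaty raises the cost of that attack to \(c+\gamma\).
\item[] Both victims surrender when attacked.
\item[] If \(\Vone\) fights, \(\Vtwo\) stays out.
\end{itemize}

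\emph{Step 2: the attacker and \(\Vone\).} Given surrender, attacking \(\Vone\) yields \(b_1-c+\max\{b_2-c-\gamma,0\}\). On \(B\) this is nonnegative if and only if \(b_1\ge c\), so the first-stage cutoff is unchanged from the baseline. The continuation cutoff \(b_2\ge c+\gamma\) is immediate. Given that \(\Vtwo\) stays out, \(\Vone\)'s fight would be solitary, so surrender is optimal by \Cref{ass:victims}. After a lone fight by \(\Vone\), the attacker still attacks \(\Vtwo\) if and only if \(b_2\ge c+\gamma\), following the main-text convention on reachability.

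\emph{Step 3: the bystander, which is the substantive step.} \Cref{def:regular} pins down \(\Vtwo\)'s belief at the off-path history in which \(\Vone\) fights: it is \(\mu(\cdot\mid b_1\ge c)\). The first-stage attack set is the same as in the baseline, so this conditioning event is unaffected by the treaty. Staying out then gives expected loss \(\Pr(b_2\ge c+\gamma\mid b_1\ge c)\,\ell\). Joining gives \(p_2\ell+d+\gamma\), because joining violates the treaty. The hypothesis of the proposition is exactly the inequality that makes staying out weakly optimal. The condition \(\gamma<1-c\) keeps \(c+\gamma\) inside the support, so the continuation attack on \(\Vtwo\) is non-vacuous. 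I note that \(c<\hatc\) plays no role in existence itself; it only marks the contrast with the no-treaty baseline, where \Cref{prop:cost} gives collective resistance.

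\emph{Step 4: selection, which I expect to be the main obstacle.} Collective resistance may also exist here, since joining now costs \(p_2\ell+d+\gamma\) and the relevant posterior is \(\Pr(b_2\ge c+\gamma\mid J(c))\). So I must show that, whenever both exist, \DC\ is the attacker-preferred equilibrium. I would argue type by type, mirroring \Cref{sec:interp}.
\begin{itemize}
\item[] Types with \(b_2\ge c+\gamma\) get \(b_1+b_2-2c-\gamma\) under \DC, versus \(p_2(b_1+b_2)-2c\) under collective resistance. The treaty imposes \(\gamma\) only on an attack on \(\Vtwo\), not on a joint fight. So I need \((1-p_2)(b_1+b_2)\ge\gamma\).
\item[] For the remaining types, \DC\ gives \(b_1-c\). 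This weakly exceeds both \(0\) and \(p_2(b_1+b_2)-2c\), because \(p_2(b_1+b_2)\le p_2\cdot 2b_1\) and \(p_2<1\).
\end{itemize}
For the first group, I would try the following. Since \(b_1\ge b_2\ge c+\gamma\), we have \(b_1+b_2\ge 2(c+\gamma)\), so it suffices that \(2(1-p_2)(c+\gamma)\ge\gamma\). This holds when \(p_2\le 1/2\), or whenever the two-front payoff is nonpositive.

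If this bound fails, I would fall back on two arguments. First, the attacker may simply stop after \(\Vone\), so the \DC\ payoff is at least \(b_1-c\). Second, for types in \(J(c)\), \(p_2(b_1+b_2)-2c\le b_1-c\) holds whenever \(p_2(b_1+b_2)\le b_1+c\). The case analysis on this last inequality is the delicate point of the argument, and I would expect the proof to need either a small-\(\gamma\) or small-\(p_2\) qualification or an interpretation of selection as an ex ante comparison.
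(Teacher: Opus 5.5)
\textbf{Steps 1--3.} These are exactly the paper's proof. The paper uses the same profile, the same cutoffs \(b_1\ge c\) and \(b_2\ge c+\gamma\), and the same regular off-path belief \(\mu(\cdot\mid b_1\ge c)\). It makes the same comparison of \(\Pr(b_2\ge c+\gamma\mid b_1\ge c)\ell\) with \(p_2\ell+d+\gamma\), and \(\Vone\) surrenders for the same reason, because a lone fight is worse than surrender. You are also right that \(c<\hatc\) matters only for the contrast with the no-treaty outcome.

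\textbf{The gap is Step 4, which you leave open.} The paper never checks attacker preference type by type in the treaty game. Its ``attacker-preferred selection'' is the standing rule announced in \Cref{sec:interp}: for the comparative statics and extensions, whenever multiple equilibria exist, the \DC\ equilibrium is selected. So once Step 3 shows that \DC\ exists, the selection clause follows at once. The paper's proof ends with a single sentence to that effect. Invoking this convention is the missing step, and no type-wise comparison is needed for the proposition as the paper intends it.

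\textbf{Your second bullet is false as written.} As a literal type-wise claim, Step 4 cannot be closed without extra conditions. From \(p_2(b_1+b_2)\le 2p_2b_1\) you get \(b_1-c\ge p_2(b_1+b_2)-2c\) only when \(2p_2b_1\le b_1+c\), and this fails for \(p_2>1/2\). The baseline bound \(p_2(b_1+b_2)<p_2(b_1+c)<b_1+c\) covers only the types with \(b_2<c\). The treaty creates a new band \(c\le b_2<c+\gamma\) where the comparison can reverse. For example, take \(p_2=0.9\), \(c=0.1\), \(\gamma=0.5\) and the type \((b_1,b_2)=(0.9,0.55)\): it gets \(0.8\) under \DC\ but \(1.105\) in a two-front war. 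So your suspicion that the baseline justification does not transfer mechanically is well founded.

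\textbf{How to get a genuine type-wise result.} Either of two routes works.
\begin{enumerate}
\item Assume \(p_2\le 1/2\). This covers both of your groups, since it gives \(2(1-p_2)(c+\gamma)\ge\gamma\) and \(2p_2b_1\le b_1\).
\item Let \(\A\) also bear \(\gamma\) in a two-front war; the paper does not specify how the treaty enters that payoff. The baseline computation then goes through verbatim. The two payoff differences become \((1-p_2)(b_1+b_2)>0\) and \((1-p_2)b_1-p_2b_2+c+\gamma>(1-p_2)(b_1+c+\gamma)>0\).
\end{enumerate}
Otherwise, rely on the paper's convention, as its proof does.
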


The treaty has two reinforcing effects. First, because \(\A\) must pay the additional cost \(\gamma\) to attack \(\Vtwo\), the event that \(\Vtwo\) is a future target becomes \(b_2\ge c+\gamma\) rather than \(b_2\ge c\). The posterior risk after an attack on \(\Vone\) therefore falls. Second, because \(\Vtwo\) would break the treaty by joining \(\Vone\)'s fight, the cost of joining rises. The displayed inequality is exactly the condition under which these two effects make staying out optimal for \(\Vtwo\).

Relative to the no-treaty collective-resistance equilibrium, the probability of an attack on \(\Vone\) rises. Without the treaty and with \(c<\hatc\), the selected attack set is \(J(c)\); with the treaty-induced D\&C equilibrium, all types with \(b_1\ge c\) attack \(\Vone\). Observe \(J(c)\subseteq\{b_1\ge c\}\), with strict containment under the maintained density conditions. Moreover, for suitable \(\gamma\) and sufficiently small \(p_2\), the treaty can also increase the probability that \(\Vtwo\) will eventually be attacked. In this case, a treaty designed to prevent aggression increases the chance of aggression! Joint resistance makes the no-treaty attack set very small when \(p_2\) is small, while the treaty-induced future-attack set \(\{b_2\ge c+\gamma\}\) remains positive.

\begin{figure}[htb]
\centering
\begin{minipage}{0.49\textwidth}
\centering
\begin{tikzpicture}[scale=5.75, font={\fontsize{11pt}{13pt}\selectfont}]
  \def\c{0.30}
  \def\S{1.56}
  \def\a{0.78}
  \def\yr{0.56}

  \begin{scope}
    \clip (0,0)--(1,0)--(1,1)--cycle;
    \fill[colHypL] (\c,0) rectangle (1,\c);
    \fill[colHypD!60] (\c,\c)--(1,\c)--(1,1)--cycle;
  \end{scope}

  \begin{scope}
    \clip (0,0)--(1,0)--(1,1)--cycle;
    \fill[colCR!50] (\a,\a)--(1,\yr)--(1,1)--cycle;
  \end{scope}

  \draw[->] (0,0)--(1.08,0) node[right] {$b_1$};
  \draw[->] (0,0)--(0,1.08) node[above] {$b_2$};
  
  \draw[very thick,colCR!90!black] (\a,\a)--(1,\yr);

  \node[below] at (\c,0) {$c<\hat c$};
  \node[left] at (0,\c) {$c$};
  \draw[dashed] (\c,0)--(\c,\c)--(1,\c);
  \draw (0,0)--(1,1)--(1,0);
  \node[black,align=center] at (.69,.15) {would attack\ only $\Vone$};
  \node[black,align=center] at (.72,.45) {would attack\ both};
  \node[black,align=center] at (.90,.80) {CR};
  \node at (.5,-.17) {Without treaty};
\end{tikzpicture}
\end{minipage}\hfill
\begin{minipage}{0.49\textwidth}
\centering
\begin{tikzpicture}[scale=5.75, font={\fontsize{11pt}{13pt}\selectfont}]
  \def\c{0.30}
  \def\ct{0.60}
  
  \begin{scope}
    \clip (0,0)--(1,0)--(1,1)--cycle;
    \fill[colDCone!50] (\c,0) rectangle (1,\ct);
    \fill[colDCboth!55] (\ct,\ct)--(1,\ct)--(1,1)--cycle;
  \end{scope}

  \draw[->] (0,0)--(1.08,0) node[right] {$b_1$};
  \draw[->] (0,0)--(0,1.08) node[above] {$b_2$};

  \node[below] at (\c,0) {$c$};
  \node[left] at (0,\c) {$c$};
  \node[left] at (0,\ct) {$c+\gamma$};
  \draw[dashed] (\c,0)--(\c,\c); 
  \draw[dashed] (0,\ct)--(1,\ct);
  \draw (0,0)--(1,1)--(1,0);
  \node[black,align=center] at (.73,.28) {\DC\ ($\Vone$ only)};
  \node[black,align=center] at (.78,.68) {\DC\ (both)};
  \node at (.5,-.17) {With treaty $\A$--$\Vtwo$};
\end{tikzpicture}
\end{minipage}
\caption{A treaty between \(\A\) and \(\Vtwo\). Left: without treaty, the gray regions show the potential \DC\ decomposition, while the red wedge is the actual collective-resistance attack set. Right: with treaty, the cutoff for attacking \(\Vtwo\) rises from \(c\) to \(c+\gamma\), shrinking the two-target region and expanding the one-target region.}
\label{fig:treaty-AV2}
\end{figure}

\Cref{fig:treaty-AV2} illustrates the mechanism. In the left panel, without the treaty, the gray regions show what would happen if \(\Vtwo\) stayed out: some types would attack only \(\Vone\), while others would attack both. Because the two-target region is large enough, \(\Vtwo\) joins and the actual attack set is the red CR wedge. In the right panel, the treaty raises the cutoff for attacking \(\Vtwo\) from \(c\) to \(c+\gamma\). The two-target region shrinks, the one-target region expands, and the first attack becomes easier to interpret as limited. A treaty that appears to protect the bystander can therefore make the first victim easier to attack.

The German--Soviet non-aggression pact is a useful historical analogy. The point is not that the pact literally guaranteed Soviet safety: Germany broke the pact and invaded the Soviet Union in 1941. Rather, the pact temporarily made German aggression against Poland easier to interpret as a limited conflict, thereby weakening the inference that the Soviet Union itself was immediately next \citep{USHMMGermanSovietPact}. This is the equilibrium danger of a neutrality pact in the model. It reassures the bystander exactly when reassurance exposes the first victim.

A collective-defense treaty works in the opposite direction.  It does not have to change the
posterior probability \(\phi(c)\).  Instead, it changes the payoff from refusing to join.  If
\(\Vone\) fights and \(\Vtwo\) stays out, \(\Vtwo\) now bears the expected future loss plus the
cost \(\gamma\) of violating the commitment.  This commitment can make joining optimal even
when the ordinary D\&C posterior is below the danger threshold.

\begin{proposition}[Collective-defense treaties]\label{prop:treaty-12}
Fix \(c>\hatc\). If \(\gamma>\ell[\kappa-\phi(c)]\), then \DC\ is not sustainable under a treaty between \(\Vone\) and \(\Vtwo\). Under \Cref{ass:positive-selection}, collective resistance is the unique pure regular equilibrium.
\end{proposition}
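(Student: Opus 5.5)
The plan is to rerun the two-part argument of \Cref{thm:baseline} with the treaty penalty \(\gamma\) added to the bystander's payoff from refusing to join. The treaty leaves \(\A\)'s payoffs and the information structure unchanged. It changes only the victims' losses at the history in which \(\Vone\) fights and the other victim stays out. The case that matters is \(\Vtwo\) refusing to join \(\Vone\)'s fight. \(\Vone\) is never asked to come to \(\Vtwo\)'s defense on the relevant path, because a fight by \(\Vtwo\) alone after surrender by \(\Vone\) is outside the treaty's joint-defense clause, or at most adds a penalty that never binds there.

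\emph{First step: rule out \DC.} In a \DC\ candidate, \(\Vone\) surrenders whenever attacked. Attacker optimality given surrender is exactly as in the baseline, so \(\A\) attacks \(\Vone\) if and only if \(b_1\ge c\), and later attacks \(\Vtwo\) if and only if \(b_2\ge c\). At the off-path history where \(\Vone\) fights, regularity (\Cref{def:regular}) pins down \(\Vtwo\)'s posterior as \(\mu(\cdot\mid b_1\ge c)\). Her danger probability is therefore \(\phi(c)\), exactly as in \eqref{eq:phi}. Staying out now costs \(\phi(c)\ell+\gamma\), while joining costs \(p_2\ell+d=\kappa\ell\). The hypothesis \(\gamma>\ell[\kappa-\phi(c)]\) gives \(\phi(c)\ell+\gamma>\kappa\ell\), so staying out is not sequentially rational. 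Hence no \DC\ equilibrium exists.

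\emph{Second step: collective resistance exists.} Take the collective-resistance profile from \Cref{thm:baseline}(ii). On path, \(\Vtwo\) joins, so she incurs no treaty penalty. \(\A\)'s attack set is \(J(c)\), and \(\Vtwo\)'s posterior danger is \(\psi(c)\). Joining is optimal whenever \(\psi(c)\ell+\gamma\ge\kappa\ell\). By \Cref{ass:positive-selection} we have \(\psi(c)\ge\phi(c)\), so
\[
\psi(c)\ell+\gamma\ \ge\ \phi(c)\ell+\gamma\ >\ \kappa\ell,
\]
and the condition holds strictly. The remaining pieces carry over unchanged. Given that \(\Vtwo\) joins, \(\Vone\) fights by \Cref{ass:victims}, and \(\A\)'s best response is to attack exactly on \(J(c)\). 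If \(J(c)=\emptyset\), the posterior is fixed by the convention \(\psi\equiv1\) or by regular off-path beliefs, and the same inequality holds. Note that \(c>\hatc\) itself is not needed for the argument; it only makes the statement nontrivial, since then \(\phi(c)<\kappa\).

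\emph{Third step: uniqueness.} This mirrors the exhaustion argument in \Cref{thm:baseline}. After \(\A\) attacks \(\Vone\) and \(\Vone\) fights, \(\Vtwo\)'s pure response either stays out or joins. In the first case, \(\Vone\) surrenders and the first step yields a contradiction. In the second case, we obtain the collective-resistance profile, up to cutoff tie-breaking. The main care point is to check that this classification survives the treaty. The \DC\ candidate's attack set must still be \(\{b_1\ge c\}\), since attacker incentives are untouched. The penalty \(\gamma\) must enter only \(\Vtwo\)'s stay-out payoff at that history, so that no new pure profile arises, for example one in which \(\Vone\) fights alone because of some treaty obligation. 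Once this is confirmed, the remaining computations are immediate.
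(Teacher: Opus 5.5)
Your proposal is correct and follows essentially the same route as the paper. The treaty penalty makes staying out cost \(\phi(c)\ell+\gamma>\kappa\ell\) at the regular off-path posterior, which rules out \DC; \Cref{ass:positive-selection} then carries the same strict inequality over to \(\psi(c)\), so collective resistance with attack set \(J(c)\) is sequentially rational. Your explicit exhaustion step for uniqueness, and your handling of \(J(c)=\emptyset\) via the convention \(\psi\equiv 1\), spell out what the paper's proof leaves implicit by appeal to the structure of \Cref{thm:baseline}.
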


Without the treaty and with \(c>\hatc\), the posterior \(\phi(c)\) is below the danger threshold, so \(\Vtwo\) stays out. With the treaty, staying out after \(\Vone\) fights imposes the additional cost \(\gamma\). The inequality in the proposition says that this cost is large enough to make joining optimal even after the ordinary D\&C posterior.

The selected attack set then falls from \(\{b_1\ge c\}\) to \(J(c)\), so the probability of an attack on \(\Vone\) strictly falls. The probability that \(\Vtwo\) is attacked also falls whenever \(\Pr(J(c))<\Pr(b_2\ge c)\), for example when \(p_2\) is sufficiently small. This is the formal version of Article 5-style collective defense. The institution removes the bystander's temptation to wait for more information. It does not have to persuade \(\Vtwo\) that \(\phi(c)\) is large; it makes refusal to join costly even when \(\phi(c)\) is small \citep{NATO1949}.

Treaties can also interact with the behavioral forces in \Cref{sec:behavioral}. The Molotov--Ribbentrop Pact gave Stalin a concrete reason to discount warnings of German attack; a CIA review of \emph{What Stalin Knew} describes how Stalin dismissed repeated warnings and the German military buildup as disinformation or provocation until the morning of June 22, 1941 \citep{CIAWhatStalinKnew}. The Doha Agreement is another cautionary example. The agreement was between the United States and the Taliban, and it reflected the desire to end the war in Afghanistan; SIGAR later reported that Taliban attacks increased during the negotiation period and that, by April 2021, U.S. intelligence assessed the Taliban to be confident it could achieve military victory \citep{StateDohaAgreement,SIGARCollapse2022}. These examples are not literal applications of the formal treaty model. They illustrate a related behavioral margin: treaty-like assurances can become material for wishful or magical thinking when the desire to avoid further conflict is strong.

\subsection{More than two victims}\label{sec:more-than-two}

With more than two victims, the bystander's question changes.  She does not ask only
whether she will be attacked later.  She also asks whether someone else would help her if
that happens.  This second question can weaken immediate collective resistance.  A victim
who expects to stand alone later has a strong reason to help the first victim now.  A victim
who expects protection from a downstream ally may instead wait behind that second line.

Consider three potential victims, \(\Vone,\Vtwo,V_3\), with \(1\ge b_1\ge b_2\ge b_3\ge0\).
Collective resistance can involve at most two victims at a time: \(\Vtwo\) may join \(\Vone\),
and if \(\Vtwo\) is later attacked, \(V_3\) may join \(\Vtwo\).  Let \(t\in[0,1]\) index the
credibility of the fallback alliance between \(\Vtwo\) and \(V_3\).  Formally, conditional on
\((b_1,b_2)\), \(b_3\) is distributed on \([0,b_2]\) according to \(H_t\), and
\(s_t(c):=\Pr_t(b_3\ge c\mid b_1\ge c,b_2\ge c)\) is continuous and strictly increasing in
\(t\), with \(s_0(c)=0\) and \(s_1(c)=1\).

\begin{proposition}[Protection of later victims]\label{prop:three-victims}
Consider the three-victim extension. If \(c>\hatc\), \DC\ exists for every \(t\in[0,1]\). If \(c<\hatc\), there is a unique \(\hat t(c)\in(0,1)\) such that collective resistance between \(\Vone\) and \(\Vtwo\) is selected for \(t<\hat t(c)\), while \DC\ is selected for \(t>\hat t(c)\).
\end{proposition}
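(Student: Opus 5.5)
The plan is to reduce the three-victim game to the baseline posterior-threshold comparison for \(\Vtwo\), with the fallback alliance entering only through \(\Vtwo\)'s expected loss from staying out. I first solve the continuation after \(\Vone\) has been conquered. This subgame between \(\A\), \(\Vtwo\) and \(V_3\) is a copy of the baseline game with \(\Vtwo\) in the role of the first victim. If \(V_3\) would join \(\Vtwo\), then \(\Vtwo\)'s later loss is \(p_2\ell+d=\kappa\ell\) rather than \(\ell\). The key modelling choice is how \(V_3\)'s decision depends on \(t\). I will treat \(s_t(c)\) as the posterior that \(V_3\) is next, and take the fallback alliance to be operative exactly when \(V_3\) is threatened, i.e.\ on the event \(b_3\ge c\). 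Then, conditional on \(b_1\ge c\) and \(b_2\ge c\), \(\Vtwo\)'s expected loss from a later attack is
\[
  \lambda_t(c):=\ell\bigl[s_t(c)\kappa+(1-s_t(c))\bigr]=\ell\bigl[1-s_t(c)(1-\kappa)\bigr],
\]
which is continuous and strictly decreasing in \(t\), with \(\lambda_0=\ell\) and \(\lambda_1=\kappa\ell\).

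Next I redo the \DC\ check of \Cref{thm:baseline}. Suppose \(\Vtwo\) would stay out after a fight by \(\Vone\). Then \(\Vone\) surrenders by \Cref{ass:victims}, and \(\A\) attacks \(\Vone\) iff \(b_1\ge c\), using the support and the fact that continuation values are nonnegative. Regularity gives \(\Vtwo\) the posterior \(\phi(c)\) that she is next. Staying out is therefore optimal iff
\[
  \phi(c)\,\lambda_t(c)\le \kappa\ell,
  \qquad\text{i.e.}\qquad
  \phi(c)\bigl[1-s_t(c)(1-\kappa)\bigr]\le\kappa .
\]
For \(c>\hatc\), \Cref{prop:cost} gives \(\phi(c)<\kappa\). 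Since the bracket is at most \(1\), the inequality holds for every \(t\), so \DC\ exists for every \(t\); this is the first claim. For \(c<\hatc\) we have \(\phi(c)>\kappa\), so the left side is strictly decreasing and continuous in \(t\). It equals \(\phi(c)>\kappa\) at \(t=0\) and \(\phi(c)\kappa<\kappa\) at \(t=1\). The intermediate value theorem then yields a unique \(\hat t(c)\in(0,1)\) at which equality holds, with \DC\ failing for \(t<\hat t(c)\) and holding for \(t>\hat t(c)\).

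Finally I handle selection. For \(t>\hat t(c)\), \DC\ exists and is attacker-preferred by the argument of \Cref{sec:interp}, so it is selected. For \(t<\hat t(c)\), \DC\ fails, and I must show that collective resistance exists. I would show it by the analogue of \(\psi(c)\lambda_t/\ell\ge\kappa\). This follows from \Cref{ass:positive-selection} whenever \(J(c)\subseteq\{b_2\ge c\}\), since then the posterior is \(1\). More generally, \(\psi(c)\ge\phi(c)\) gives \(\psi(c)[1-s_t(1-\kappa)]\ge\phi(c)[1-s_t(1-\kappa)]>\kappa\), provided \(s_t\) is evaluated on the same conditioning event. Beyond that, I need only the remaining pure-strategy exhaustiveness already established in \Cref{thm:baseline}.

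The main obstacle is the first step: justifying that \(V_3\)'s joining decision, and hence \(\Vtwo\)'s fallback protection, is governed exactly by \(s_t(c)\). Part of this is checking that \(V_3\)'s own posterior problem in the continuation resolves as assumed. Part of it is checking that the conditioning for \(s_t\) inside the set \(J(c)\) does not break the monotone comparison. If \(V_3\)'s response is instead a cutoff in her posterior, I would redefine \(\lambda_t\) through that cutoff. Monotonicity in \(t\) should survive, because \(s_t\) is increasing.
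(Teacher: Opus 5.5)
Your argument is the paper's proof. Your \(\lambda_t\) computation gives the paper's normalized continuation loss \(\chi(c,t)=\phi(c)[1-(1-\kappa)s_t(c)]\). For \(c>\hatc\) you use \(\phi(c)<\kappa\) with the bracket at most one, and for \(c<\hatc\) you use continuity, strict monotonicity in \(t\), and the endpoint values \(\phi(c)>\kappa\) and \(\kappa\phi(c)<\kappa\) to get the unique \(\hat t(c)\), as the paper does.

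Your extra step checking that collective resistance exists for \(t<\hat t(c)\) goes beyond the paper, which does not verify it. Only your case \(J(c)\subseteq\{b_2\ge c\}\) is airtight there: in the collective-resistance pool, the relevant downstream-support probability is conditioned on \(J(c)\). That event selects high \(b_2\), so the probability need not be bounded by \(s_t(c)\).
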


The cutoff arises from \(\Vtwo\)'s continuation value if she stays out. With probability \(\phi(c)\), she is later targeted. Conditional on being targeted, a stronger fallback alliance makes it more likely that \(V_3\) is also threatened and therefore willing to join her defense. In the simple specification above, the normalized continuation loss from staying out is \(\phi(c)[1-(1-\kappa)s_t(c)]\), and \(\hat t(c)\) is determined by equality with \(\kappa\).\footnote{Conditional on \(\Vtwo\) being targeted later, downstream support arrives with probability \(s_t(c)\). With that support, \(\Vtwo\)'s normalized loss is \(\kappa\); without it, the normalized loss is \(1\). The conditional normalized loss is therefore \(s_t(c)\kappa+[1-s_t(c)]=1-(1-\kappa)s_t(c)\). Multiplying by the probability \(\phi(c)\) that \(\Vtwo\) is later targeted gives the expression in the text.} Thus, when \(c<\hatc\), raising \(t\) can move \(\Vtwo\) from immediate defense of \(\Vone\) to waiting behind the second line. \Cref{fig:ct-partition} summarizes this partition of the \((c,t)\)-space: for \(c>\hat c\), \DC\ obtains for all \(t\), while for \(c<\hat c\), increasing \(t\) moves the selected outcome from collective resistance to \DC.

\begin{figure}[htb]
\centering
\begin{tikzpicture}[scale=5.8, font={\fontsize{11pt}{13pt}\selectfont}]
  \def\chat{0.62}
  \fill[colCR!35] (0,0) -- (0,1)
    .. controls (.16,.92) and (.42,.38) .. (\chat,0) -- cycle;
  \fill[colDCone!60] (0,1) -- (1,1) -- (1,0) -- (\chat,0)
    .. controls (.42,.38) and (.16,.92) .. (0,1) -- cycle;
  \draw[->] (0,0) -- (1.08,0) node[right] {$c$};
  \draw[->] (0,0) -- (0,1.08) node[above] {$t$};
  \draw (0,1) -- (1,1) -- (1,0);
  \draw[very thick,dcpurple] (0,1) .. controls (.16,.92) and (.42,.38) .. (\chat,0);
  \draw[dashed,thin] (\chat,0) -- (\chat,1);
  \node[below] at (\chat,0) {$\hat c$};
  \node[left] at (0,1) {$1$};
  \node[left] at (0,0) {$0$};
  \node[black] at (.23,.45) {CR};
  \node[black] at (.80,.45) {\DC};
  \node[black,rotate=-54] at (.34,.66) {$t=\hat t(c)$};
\end{tikzpicture}
\caption{Partition of the \((c,t)\)-space in the three-victim model. The red region denotes collective resistance between \(\Vone\) and \(\Vtwo\); the yellow region denotes \DC. For \(c>\hat c\), \DC\ occurs for every \(t\). For \(c<\hat c\), increasing \(t\) makes the \(\Vtwo\)--\(V_3\) fallback alliance more credible, causing \(\Vtwo\) to wait behind that second line rather than help \(\Vone\) immediately.   The figure depicts a case in which \(\hat t(c)\) is decreasing; the proposition does not require this monotonicity.}
\label{fig:ct-partition}
\end{figure}


The result is a reverse comparative static. If \(V_3\) is weak or irrelevant, \(\Vtwo\) expects to stand alone later, so helping \(\Vone\) immediately is attractive. If \(V_3\) is a credible downstream ally, \(\Vtwo\) may wait, because a later attack on her would trigger a different defense coalition. Making the later victims better protected can therefore make the first victim more isolated.

The Russia--Ukraine war illustrates the force captured by the model. Ukraine was outside NATO's Article 5, while many neighboring states were protected by it; Finland joined NATO in 2023 and Sweden in 2024 \citep{NATOFinland2023,NATOSweden2024}. From the perspective of the model, such downstream protection can create a D\&C expectation: protected neighbors may perceive a later attack on themselves as less likely or less dangerous because it would trigger a broader collective-defense system, and for that reason, an aggressor may expect them to be less willing to enter the first victim's fight \citep{NATO1949}.\footnote{We do not claim  that that NATO enlargement caused the invasion, nor that protected states were indifferent to Ukraine. The narrower point is that downstream protection can shape both bystander behavior and aggressor expectations.}

\section{Conclusion}\label{sec:conclusion}

Divide-and-conquer succeeds when a first attack is sufficiently consistent with a target-specific motive. The victims may understand the value of joint resistance, yet fail to resist jointly because the bystander is unsure whether the attack reveals a threat to her. The model shows how this epistemic friction can generate sequential conquest, why higher attack costs and lower correlation can make division easier, and why behavioral responses, rhetoric, treaties, and downstream defense networks matter. Effective collective defense requires not only the ability to fight together, but also a common interpretation of what the first attack reveals.

\appendix

\section{Proofs for the Baseline Model}\label{app:proofs}

\begin{proof}[Proof of \Cref{thm:baseline}]
In any pure regular equilibrium, \(\Vtwo\)'s action after the history in which \(\A\) attacks \(\Vone\) and \(\Vone\) fights is either to stay out or to join. If she stays out, \(\Vone\)'s fight would be solitary and gives loss \(p_1\ell+d>\ell\); hence \(\Vone\) surrenders when attacked. Given surrender by \(\Vone\), \(\A\)'s payoff from the initial attack is \(b_1-c+\max\{b_2-c,0\}\), so, on \(B\), \(\A\) attacks \(\Vone\) if and only if \(b_1\ge c\), and later attacks \(\Vtwo\) if and only if \(b_2\ge c\). Regularity then gives posterior \(\Pr(\cdot\mid b_1\ge c)\) after the off-path fight by \(\Vone\). Staying out gives \(\Vtwo\) expected loss \(\phi(c)\ell\), while joining gives loss \(p_2\ell+d=\kappa\ell\). Thus the \DC\ equilibrium exists if and only if \(\phi(c)\le\kappa\).

If instead \(\Vtwo\) joins after \(\Vone\)'s fight, then \(\Vone\)'s loss from fighting is \(p_2\ell+d<\ell\), so \(\Vone\) fights when attacked. Anticipating joint resistance, \(\A\)'s payoff from attacking is \(p_2(b_1+b_2)-2c\), and hence \(\A\) attacks if and only if \((b_1,b_2)\in J(c)\). If \(J(c)\) is empty, no type attacks and collective resistance fully deters attack. If \(J(c)\) is nonempty, conditional on such an attack, \(\Vtwo\)'s posterior probability of being a future target is \(\psi(c)\). Joining is optimal if and only if \(\psi(c)\ell\ge p_2\ell+d\), equivalently \(\psi(c)\ge\kappa\). These two cases exhaust pure regular equilibria.

The trichotomy follows from \Cref{ass:positive-selection}. If \(\phi(c)>\kappa\), \DC\ fails and \(\psi(c)\ge\phi(c)>\kappa\), so collective resistance exists. If \(\psi(c)<\kappa\), collective resistance fails and \(\phi(c)\le\psi(c)<\kappa\), so \DC\ exists. If \(\phi(c)\le\kappa\le\psi(c)\), both exist.
\end{proof}

\begin{proof}[Proof of \Cref{lem:LRphi}]
On the support \(B\), \(b_2\ge c\) implies \(b_1\ge c\). Hence
\[
  \phi(c)=\frac{\Pbb(b_2\ge c)}{\Pbb(b_1\ge c)}.
\]
Let \(\overline F_i(c)=\Pbb(b_i\ge c)\) be a decumulative or survival function. We need to show that \(\overline F_2(c)/\overline F_1(c)\) is decreasing. Its derivative has the sign of
\[
  -f_2(c)\overline F_1(c)+f_1(c)\overline F_2(c).
\]
By LR dominance, for every \(x\ge c\),
\[
  \frac{f_1(x)}{f_2(x)}\ge \frac{f_1(c)}{f_2(c)}.
\]
Multiplying by \(f_2(x)\) and integrating over \([c,1]\) yields
\[
  \overline F_1(c)\ge \frac{f_1(c)}{f_2(c)}\overline F_2(c),
\]
which is equivalent to the derivative being nonpositive. The limits follow because both events have probability one at \(c=0\), while the upper tail of the lower target \(b_2\) vanishes faster than the upper tail of \(b_1\) as \(c\uparrow1\) under positive density on the triangular support.
\end{proof}

\begin{proof}[Proof of \Cref{prop:cost}]
By \Cref{lem:LRphi} and strict monotonicity, \(\phi(c)=\kappa\) has a unique solution \(\hatc\), and \(\phi(c)\le\kappa\) if and only if \(c\ge\hatc\). By \Cref{thm:baseline}, \DC\ exists if and only if \(c\ge\hatc\). For \(c<\hatc\), \(\phi(c)>\kappa\); by \Cref{ass:positive-selection}, \(\psi(c)\ge\phi(c)>\kappa\), so collective resistance exists and is selected. For \(c>\hatc\), \DC\ exists and is selected by the attacker-preferred selection.

Under collective resistance, the initial attack set is \(J(c)=\{p_2(b_1+b_2)\ge 2c\}\). Under \DC, the initial attack set is \(\{b_1\ge c\}\). On \(B\), \(J(c)\subseteq\{b_1\ge c\}\): if \(b_1<c\), then \(b_2\le b_1<c\), so \(p_2(b_1+b_2)<2c\). With positive density, the inclusion is strict in measure whenever \(p_2\) is sufficiently low.

If \(p_2<2c/(1+c)\), then \(J(c)\subseteq\{b_2\ge c\}\). Indeed, if \(b_2<c\), then \(b_1+b_2\le 1+c\), so \(p_2(b_1+b_2)<2c\). Thus the set of types for which \(\Vtwo\) is attacked expands from \(J(c)\) under collective resistance to \(\{b_2\ge c\}\) under \DC.
\end{proof}

\begin{proof}[Proof of \Cref{prop:correlation}]
Let \(u=F(c)\) and \(C=C_\rho(u,u)\). From \Cref{eq:copulaphi},
\[
  \phi_\rho(c)=\frac{1-2u+C}{1-C}.
\]
Differentiating with respect to \(C\),
\[
  \frac{\partial}{\partial C}\left(\frac{1-2u+C}{1-C}\right)=\frac{2(1-u)}{(1-C)^2}>0.
\]
Thus \(\phi_\rho(c)\) is increasing in \(\rho\) whenever \(C_\rho(u,u)\) is increasing in \(\rho\).
\end{proof}

\begin{proof}[Proof of \Cref{prop:asymmetry}]
On the support \(B\), \(b_2\ge c\) implies \(b_1\ge c\), and hence implies \(b_1+a\ge c\) for every \(a\ge0\). Thus
\[
  \phi_a(c)=\frac{\Pbb(b_2\ge c)}{\Pbb(b_1+a\ge c)}=
  \frac{\Pbb(b_2\ge c)}{\Pbb(b_1\ge c-a)}.
\]
The numerator is independent of \(a\), while the denominator is weakly increasing in \(a\). Hence \(\phi_a(c)\) is weakly decreasing in \(a\).
\end{proof}

\section{Proofs for the Extensions}\label{app:extension-proofs}

\begin{proof}[Proof of \Cref{prop:wishful-thinking}]
If \(\Vtwo\) stays out, the most reassuring feasible subjective posterior is \(m=\zeta\phi(c)\). Hence the optimized loss from staying out is \((1+\theta\zeta)\phi(c)\ell\), while the loss from joining is \((1+\theta)\kappa\ell\). Thus staying out is optimal if and only if
\[
        \phi(c)\le \kappa^{WT}(\theta,\zeta):=\frac{(1+\theta)\kappa}{1+\theta\zeta}.
\]
Strict monotonicity of \(\phi\) gives the cutoff characterization in the interior case. If \(\theta>0\) and \(\zeta<1\), then \(\kappa^{WT}(\theta,\zeta)>\kappa\), so the cutoff is strictly below the Bayesian cutoff \(\hatc\).
\end{proof}

\begin{proof}[Proof of \Cref{prop:magical-thinking}]
Let \(\Phi_\omega(c):=\Pr(b_2\ge c+\omega\mid b_1\ge c)\). Under the magical belief, staying out exposes \(\Vtwo\) to later attack only when \(b_2\ge c+\omega\). The perceived loss from staying out is therefore \(\Phi_\omega(c)\ell\), while joining costs \(\kappa\ell\). Thus \DC\ is sustained whenever \(\Phi_\omega(c)\le\kappa\). Since \(\Phi_\omega(c)\le\phi(c)\) for every \(c\), every cost that sustains \DC\ in the Bayesian benchmark also sustains \DC\ under magical thinking.

At \(c=\hatc\), positive density on the interior of \(B\) and \(\omega>0\) imply \(\Phi_\omega(\hatc)<\phi(\hatc)=\kappa\). By continuity, there is \(\varepsilon>0\) such that \(\Phi_\omega(c)<\kappa\) for every \(c\in[\hatc-\varepsilon,\hatc]\). For \(c\ge\hatc\), \(\Phi_\omega(c)\le\phi(c)\le\kappa\). Define
\[
\hat c^{MT}_\omega:=\inf\{x\in[0,1]: \Phi_\omega(c)\le\kappa \text{ for every } c\in[x,1]\}.
\]
Then \(\hat c^{MT}_\omega<\hatc\), and \DC\ is sustained for every \(c\ge\hat c^{MT}_\omega\).
\end{proof}

\begin{proof}[Proof of \Cref{prop:propaganda-formal}]
For a signal cell \(E\) with positive probability, define
\[
        \phi_E(c):=\Pr(b_2\ge c\mid b_1\ge c,E),
        \qquad
        \psi_E(c):=\Pr(b_2\ge c\mid J(c),E).
\]
A cell assigned to \DC\ is checked with \(\phi_E(c)\le\kappa\); a cell assigned to collective resistance is checked with \(\psi_E(c)\ge\kappa\).

Fix \(c<\hatc\). For \(\Delta<1-c\), the cell \(L_\Delta\cap\{b_1\ge c\}\) has positive probability. As \(\Delta\uparrow1-c\), the event \(L_\Delta\cap\{b_1\ge c,b_2\ge c\}\) shrinks to a boundary set, while \(L_\Delta\cap\{b_1\ge c\}\) retains positive probability. Hence \(\phi_{L_\Delta}(c)\to0\). Thus there exists \(\bar\Delta_1(c)<1-c\) such that \(\phi_{L_\Delta}(c)<\kappa\) for every \(\Delta\in(\bar\Delta_1(c),1-c)\). At \(\Delta=1-c\), the cell \(H_\Delta\) contains all types with \(b_2\ge c\); it removes, if anything, only types with \(b_2<c\). Therefore \(\psi_{H_{1-c}}(c)\ge\psi(c)\). By \Cref{ass:positive-selection} and \(c<\hatc\), \(\psi(c)\ge\phi(c)>\kappa\). By continuity, there exists \(\bar\Delta_2(c)<1-c\) such that \(\psi_{H_\Delta}(c)>\kappa\) for every \(\Delta\in(\bar\Delta_2(c),1-c)\). Let \(\bar\Delta(c)=\max\{\bar\Delta_1(c),\bar\Delta_2(c)\}\). Then every \(\Delta\in(\bar\Delta(c),1-c)\) supports \DC\ on \(L_\Delta\) and collective resistance on \(H_\Delta\).

Now fix \(c>\hatc\). Since \(L_\Delta\) converges to the whole support up to the diagonal, which has measure zero, \(\phi_{L_\Delta}(c)\to\phi(c)<\kappa\) as \(\Delta\downarrow0\). Hence there exists \(\underline\Delta_1(c)>0\) such that \(\phi_{L_\Delta}(c)<\kappa\) for every \(\Delta\in(0,\underline\Delta_1(c))\). Choose \(\underline\Delta_2(c)>0\) such that \(p_2(2c+\Delta)<2c\) for every \(\Delta\in(0,\underline\Delta_2(c))\). If \((b_1,b_2)\in H_\Delta\), \(b_2<c\), and \(b_1\le b_2+\Delta\), then \(p_2(b_1+b_2)<p_2(2c+\Delta)<2c\); hence such a type is not in \(J(c)\). Therefore \(J(c)\cap H_\Delta\subseteq\{b_2\ge c\}\), so \(\psi_{H_\Delta}(c)=1\) whenever the conditioning event is nonempty. Let \(\underline\Delta(c)=\min\{\underline\Delta_1(c),\underline\Delta_2(c)\}\). Then every \(\Delta\in(0,\underline\Delta(c))\) sustains collective resistance on \(H_\Delta\) and leaves \DC\ sustained on \(L_\Delta\). The attack-set description in the text follows from the two cellwise continuations.
\end{proof}

\begin{proof}[Proof of \Cref{prop:treaty-A2}]
Under the treaty, if \(\Vone\) is expected to surrender, \(\A\)'s payoff from attacking \(\Vone\) is \(b_1-c\), so \(\A\) attacks \(\Vone\) if and only if \(b_1\ge c\). Conditional on \(\Vone\)'s surrender, attacking \(\Vtwo\) yields \(b_2-c-\gamma\), so \(\A\) attacks \(\Vtwo\) if and only if \(b_2\ge c+\gamma\). Thus \(\Vtwo\)'s posterior probability of being a future target is \(\Pr(b_2\ge c+\gamma\mid b_1\ge c)\). If \(\Vtwo\) joins, it pays \(p_2\ell+d+\gamma\). The displayed condition makes staying out optimal. Given that \(\Vtwo\) stays out, \(\Vone\)'s resistance would be solitary, which is worse than surrender by \Cref{ass:victims}. Thus the treaty supports the \DC\ equilibrium, which is selected by attacker preference.

Without the treaty and with \(c<\hatc\), \DC\ is not sustainable, and the selected collective-resistance attack set is \(J(c)\). Since \(J(c)\subseteq\{b_1\ge c\}\) on \(B\), with strict containment under the maintained density conditions, the treaty-induced \DC\ equilibrium raises the probability of an attack on \(\Vone\). For the claim about attacks on \(\Vtwo\), choose \(\gamma<1-c\) so that the displayed condition holds; this is possible by taking \(\gamma\) close enough to \(1-c\). For sufficiently small \(p_2\), the set \(J(c)\) has arbitrarily small measure, and is empty if \(p_2<c\). Since \(\Pr(b_2\ge c+\gamma)>0\) whenever \(c+\gamma<1\), the probability that \(\Vtwo\) is attacked is also strictly larger for such parameters.
\end{proof}

\begin{proof}[Proof of \Cref{prop:treaty-12}]
Without the treaty and with \(c>\hatc\), \(\phi(c)<\kappa\), so \(\Vtwo\) stays out in the attacker-preferred \DC\ equilibrium. With the treaty, staying out after \(\Vone\) fights costs \(\phi(c)\ell+\gamma\), while joining costs \(p_2\ell+d=\kappa\ell\). If \(\gamma>\ell[\kappa-\phi(c)]\), joining is strictly optimal even after the attack-on-\(\Vone\) posterior. Thus \DC\ fails. Since \(\psi(c)\ge\phi(c)\) by \Cref{ass:positive-selection}, the same inequality implies \(\psi(c)\ell+\gamma>p_2\ell+d\), so collective resistance is sequentially rational. Anticipating collective resistance, \(\A\) attacks if and only if \((b_1,b_2)\in J(c)\).

The attack-probability comparisons follow from set inclusion. Under \DC, the initial attack set is \(\{b_1\ge c\}\), while under collective resistance it is \(J(c)\). On \(B\), \(J(c)\subseteq\{b_1\ge c\}\), with strict containment under the maintained density conditions, so attacks on \(\Vone\) strictly fall. Attacks on \(\Vtwo\) fall whenever \(\Pr(J(c))<\Pr(b_2\ge c)\). A sufficient condition is \(p_2<2c/(1+c)\), because then \(J(c)\subseteq\{b_2\ge c\}\), with strict containment under positive density.
\end{proof}

\begin{proof}[Proof of \Cref{prop:three-victims}]
If \(\Vtwo\) stays out when \(\Vone\) fights, her normalized continuation loss is \(\chi(c,t)=\phi(c)[1-(1-\kappa)s_t(c)]\). Joining costs \(\kappa\). Thus \(\Vtwo\) stays out if and only if \(\chi(c,t)\le\kappa\). If \(c>\hatc\), then \(\phi(c)<\kappa\), so \(\chi(c,t)\le\phi(c)<\kappa\) for every \(t\); \DC\ is sustained. If \(c<\hatc\), then \(\chi(c,0)=\phi(c)>\kappa\), while \(\chi(c,1)=\kappa\phi(c)<\kappa\). Since \(s_t(c)\) is continuous and strictly increasing in \(t\), \(\chi(c,t)\) is continuous and strictly decreasing in \(t\), yielding a unique cutoff \(\hat t(c)\in(0,1)\). It is characterized by \(s_{\hat t(c)}(c)=[\phi(c)-\kappa]/[(1-\kappa)\phi(c)]\).
\end{proof}

\bibliographystyle{plainnat}
\bibliography{DC-theory_refs}

\clearpage
\section*{Online Appendix}
\addcontentsline{toc}{section}{Online Appendix}
\setcounter{section}{0}
\setcounter{equation}{0}
\setcounter{lemma}{0}
\setcounter{example}{0}
\renewcommand{\thesection}{OA.\arabic{section}}
\renewcommand{\theequation}{OA.\arabic{equation}}
\renewcommand{\thelemma}{OA.\arabic{lemma}}
\renewcommand{\theexample}{OA.\arabic{example}}

\section{Why the attacker does not begin with \(\Vtwo\)}\label{app:v2first}

Consider the extended game in which \(\A\) may choose at the first stage whether to attack
\(\Vone\), attack \(\Vtwo\), attack both victims simultaneously, or not attack. The same
regularity requirement applies: if the attacked victim unexpectedly fights, the other victim
updates from \(\A\)'s observed first attack and \(\A\)'s strategy, but not from the victim's
unexpected action.

First consider a simultaneous attack. Both victims then know that the attack is systemic.
By \Cref{ass:victims}, they fight jointly, and \(\A\)'s payoff is
\[
  p_2(b_1+b_2)-2c.
\]
The same payoff is obtained if a sequential first attack triggers joint resistance. If the
first attack does not trigger joint resistance, sequential attack gives \(\A\) the additional
option to continue only when continuation is profitable. Thus simultaneous attack is weakly
dominated by sequential attack and is not selected.

Next consider an attack on \(\Vtwo\) first. Let \(E_2\) be the set of types that choose this
first attack in a pure strategy equilibrium. If \(b_1<c\), then \(b_2\le b_1<c\), so neither
victim is profitable. Such a type obtains a strictly negative payoff from any attack and can
obtain zero by not attacking. Hence, in any equilibrium in which \(E_2\) has positive
probability,
\[
  E_2\subseteq \{b_1\ge c\}.
\]
Therefore, after observing an attack on \(\Vtwo\), \(\Vone\)'s posterior probability of being a
future target is one.

Suppose \(\Vtwo\) is attacked first and fights. By regularity, \(\Vone\)'s posterior after this
off-path fight is the posterior induced by the attack on \(\Vtwo\) itself. Since this posterior
puts probability one on \(b_1\ge c\), staying out exposes \(\Vone\) to loss \(\ell\), whereas
joining the fight gives loss \(p_2\ell+d\). By \Cref{ass:victims}, \(p_2\ell+d<\ell\), so
\(\Vone\) joins. Anticipating this, \(\Vtwo\) strictly prefers fighting to surrender. Thus any
equilibrium path beginning with an attack on \(\Vtwo\) leads to joint resistance.

The payoff from beginning with \(\Vtwo\) is therefore at most
\[
  p_2(b_1+b_2)-2c.
\]
Beginning with \(\Vone\) gives the same payoff if it triggers joint resistance. If instead it
induces the D\&C continuation, \(\A\)'s payoff is
\[
  b_1-c+\max\{b_2-c,0\},
\]
which is strictly larger than \(p_2(b_1+b_2)-2c\). Indeed, if \(b_2\ge c\), the difference is
\[
  (b_1+b_2-2c)-\bigl[p_2(b_1+b_2)-2c\bigr]
  =(1-p_2)(b_1+b_2)>0.
\]
If \(b_2<c\), the difference is
\[
  b_1-c-\bigl[p_2(b_1+b_2)-2c\bigr]
  =(1-p_2)b_1-p_2b_2+c
  >(1-p_2)(b_1+c)>0.
\]
Thus an attack on \(\Vtwo\) first is never payoff-superior to an attack on \(\Vone\) first,
and it is strictly inferior whenever attacking \(\Vone\) first yields the D\&C continuation.

We therefore select equilibria in which \(\A\) does not begin with \(\Vtwo\). This is consistent
with the attacker-preferred selection used in the main text and with forward-induction
reasoning: a costly first attack on \(\Vtwo\) cannot credibly signal that \(\Vone\) is safe,
because any type willing to make such an attack must also find \(\Vone\) profitable.

\section{Order-statistic microfoundation}\label{oa:orderstat}
If \(b_1=\max\{X_1,X_2\}\) and \(b_2=\min\{X_1,X_2\}\) for i.i.d. draws from a distribution \(F\) with positive density on \([0,1]\), then
\[
\Pbb(b_2\ge c)=\Pbb(X_1\ge c,X_2\ge c)=[1-F(c)]^2,
\]
and
\[
\Pbb(b_1\ge c)=1-F(c)^2=[1-F(c)][1+F(c)].
\]
Therefore
\[
\phi(c)=\frac{1-F(c)}{1+F(c)},
\]
and differentiating gives
\[
\phi'(c)=\frac{-2f(c)}{[1+F(c)]^2}<0.
\]
This benchmark also verifies \Cref{ass:positive-selection}.  Fix \(c\), and write
\(A=\{b_2\ge c\}\), \(B=\{b_1\ge c\}\), and
\(J=\{p_2(b_1+b_2)\ge 2c\}\).  Since \(p_2<1\) and \(b_2\le b_1\), the event
\(J\) is contained in \(B\).  Conditional on \(A\), the two primitive draws are i.i.d. from
\(F(\cdot\mid X\ge c)\).  Conditional on \(B\setminus A\), one draw is from
\(F(\cdot\mid X\ge c)\) and the other is from \(F(\cdot\mid X<c)\).  Hence the sum
\(X_1+X_2\) conditional on \(A\) first-order stochastically dominates the sum conditional
on \(B\setminus A\).  Since \(J\) is a high-sum event, \(\Pbb(J\mid A)\ge \Pbb(J\mid
B\setminus A)\).  Bayes' rule then gives
\[
  \Pbb(A\mid J)\ge \Pbb(A\mid B),
\]
which is \(\psi(c)\ge\phi(c)\).

\section{When \(\A\) Must Defeat \(\Vone\) to Attack \(\Vtwo\)}
\label{oa:reachability}

The main text assumes that, if \(\Vtwo\) stays out after an off-path lone fight by \(\Vone\), \(\A\) can later attack \(\Vtwo\) regardless of the outcome of that fight. This appendix considers the alternative specification in which \(\A\) can attack \(\Vtwo\) only if it first defeats \(\Vone\) in the lone fight. Thus, if \(\A\) loses to \(\Vone\), the sequence ends and \(\Vtwo\) is not attacked.

Consider first a \DC\ candidate. After observing an attack on \(\Vone\), regularity gives the same posterior
\[
  \phi(c)=\Pr(b_2\ge c\mid b_1\ge c).
\]
If \(\Vtwo\) stays out after an off-path fight by \(\Vone\), she is attacked later only if two events occur: \(\A\) defeats \(\Vone\) in the lone fight, and \(\A\) finds \(\Vtwo\) profitable. Since \(\A\) wins a lone fight with probability \(p_1\), \(\Vtwo\)'s expected loss from staying out is
\[
  p_1\phi(c)\ell .
\]
Joining costs \(p_2\ell+d=\kappa\ell\). Hence, the \DC\ condition becomes
\[
  p_1\phi(c)\le \kappa .
\]
This condition is weaker than the main-text condition \(\phi(c)\le\kappa\). Outcome-dependent access to \(\Vtwo\) therefore makes \DC\ easier to sustain: by staying out, \(\Vtwo\) preserves the possibility that \(\Vone\)'s lone fight defeats \(\A\) and prevents any later attack on \(\Vtwo\).

Now consider a collective-resistance candidate. If \(\Vtwo\) joins after \(\Vone\) fights, then \(\Vone\) fights when attacked, and \(\A\) attacks only if
\[
  (b_1,b_2)\in J(c)=\{(b_1,b_2)\in B:p_2(b_1+b_2)\ge 2c\}.
\]
If \(\Vtwo\) deviates and stays out after such an attack, she is attacked later only if \(\A\) defeats \(\Vone\) in the lone fight and \(b_2\ge c\). Her expected loss from staying out is therefore
\[
  p_1\psi(c)\ell ,
\]
where \(\psi(c)=\Pr(b_2\ge c\mid (b_1,b_2)\in J(c))\) when \(J(c)\) is nonempty, with the same convention as in the main text when \(J(c)\) is empty. Thus collective resistance is sequentially rational if and only if
\[
  p_1\psi(c)\ge \kappa .
\]

The cost comparative static is unchanged in form. Under the same monotonicity conditions as in \Cref{sec:cost}, the D\&C cutoff is determined by
\[
  \phi(c)=\frac{\kappa}{p_1}
\]
instead of \(\phi(c)=\kappa\), whenever \(\kappa/p_1<1\). Since \(p_1<1\), this cutoff lies weakly below the main-text cutoff \(\hat c\), and strictly below it when \(\kappa<p_1\). If \(\kappa\ge p_1\), then \(p_1\phi(c)\le\kappa\) for all \(c\), so \DC\ is sustained for all relevant costs.

The case \(\kappa<p_1\) preserves the full-information benchmark: a bystander who knows she is next still prefers to join immediately rather than rely on the possibility that \(\A\) loses a lone fight against \(\Vone\). If instead \(\kappa\ge p_1\), the alternative specification changes that benchmark, because even a bystander who knows she is next may prefer to stay out and let \(\Vone\)'s lone fight determine whether \(\A\) can proceed. In either case, making the second attack contingent on \(\A\)'s defeating \(\Vone\) does not weaken the \DC\ logic; it makes the bystander's inaction easier to support.
\end{document}